\documentclass[11pt, a4paper]{article}

% Packages and Formatting
\usepackage[utf8]{inputenc}
\usepackage[T1]{fontenc}
\usepackage[english]{babel}
\usepackage{lmodern} 

% Geometry and Layout
\usepackage[top=2.5cm, bottom=2.5cm, left=2.5cm, right=2.5cm]{geometry}
\usepackage{setspace}
\onehalfspacing 

% Mathematics
\usepackage{amsmath, amssymb, amsfonts, amsthm}
\usepackage{mathtools}
\usepackage{bm} 
\usepackage{mathrsfs} 

% Bibliography Management 
\usepackage[numbers]{natbib} 

% Graphics and Tables
\usepackage{graphicx}
\usepackage{booktabs} 
\usepackage{float}
\usepackage[caption=false]{subfig}

% Links
\usepackage[colorlinks=true, linkcolor=blue, citecolor=blue, urlcolor=blue]{hyperref} 

\usepackage{fancyhdr}
\fancypagestyle{firstpage}{
  \fancyhf{} 
  \lhead{\textit{Preprint}} 
  \rhead{\today} 
  \cfoot{\thepage} 
   
}

% Theorem Environments and Macros
\theoremstyle{plain}
\newtheorem{theorem}{Theorem}[section]
\newtheorem{lemma}[theorem]{Lemma}
\newtheorem{proposition}[theorem]{Proposition}

\theoremstyle{definition}
\newtheorem{definition}[theorem]{Definition}

\theoremstyle{remark}

% Custom Math Operators

\title{\textbf{Efficient Importance Sampling under Heston Model: \\ Short Maturity and Deep Out-of-the-Money Options}}

\author{
  Yun-Feng Tu\thanks{
  Department of Mathematics, National Tsing Hua University, Hsinchu, Taiwan, \texttt{alan910721@gmail.com}
  }
  \and
  Chuan-Hsiang Han\thanks{
  Corresponding author. Department of Quantitative Finance and Department of Mathematics, National Tsing Hua University, Hsinchu, Taiwan, \texttt{chhan@mx.nthu.edu.tw}
  }
}
\date{} 

\begin{document}

\maketitle
\thispagestyle{firstpage} 

\begin{abstract}
\noindent
This paper investigates asymptotically optimal importance sampling (IS) schemes for pricing European call options under the Heston stochastic volatility model. We focus on two distinct rare-event regimes where standard Monte Carlo methods suffer from significant variance deterioration: the short-maturity limit ($T \to 0$) and the deep out-of-the-money (OTM) limit ($K \to \infty$). Leveraging the large deviation principle (LDP), we design a state-dependent change of measure derived from the asymptotic behavior of the log-price cumulant generating functions. 

In the short-maturity regime, we rigorously prove that our proposed IS drift, inspired by the variational characterization of the rate function, achieves logarithmic efficiency (asymptotic optimality) by minimizing the decay rate of the second moment of the estimator. In the deep OTM regime, we introduce a novel \emph{slow mean-reversion scaling} for the variance process, where the mean-reversion speed scales as $\delta = \varepsilon^{-2}$ with respect to the small-noise parameter $\varepsilon = 1/\log(K/S_0)$. We establish that under this specific scaling, the variance process contributes non-trivially to the large deviation rate function, requiring a specialized Riccati analysis to verify optimality. Numerical experiments demonstrate that the proposed method yields substantial variance reduction—characterized by factors exceeding several orders of magnitude—compared to standard estimators in both asymptotic regimes.

\vspace{1em}
\noindent \textbf{Keywords:} Importance Sampling, Heston Model, Large Deviations, Asymptotic Optimality, Rare-event Simulation, Riccati Equations.
\end{abstract}
\section{Introduction}

Stochastic volatility models have become indispensable tools in quantitative finance for capturing empirical stylized facts of asset returns, most notably the ``volatility smile'' and the heavy-tailed nature of return distributions. Among these, the Heston model \cite{heston1993closed} serves as a benchmark due to its tractability and ability to reproduce leverage effects. While semi-closed form solutions via Fourier transforms exist for plain vanilla options under the Heston model \cite{gatheral2006volatility}, Monte Carlo (MC) simulation remains the standard approach for pricing path-dependent derivatives, calibrating complex portfolios, or verifying analytical approximations.

However, standard Monte Carlo methods suffer from severe computational inefficiency when estimating probabilities of rare events. In the context of option pricing, these rare events typically manifest in two regimes: (i) \textbf{short-maturity options}, where the time horizon $T$ is too brief for the asset price to diffuse to the strike level with high probability; and (ii) \textbf{deep out-of-the-money (OTM) options}, where the strike price $K$ is significantly larger than the spot price $S_0$. In both scenarios, the probability of exercise decays exponentially, causing the relative error of the standard MC estimator to grow unbounded for a fixed sample size. This phenomenon dictates that the number of simulation paths required to achieve a fixed precision must grow exponentially, rendering naive simulation intractable.

\paragraph{Importance Sampling and Large Deviations}
Importance Sampling (IS) is a variance reduction technique designed to address this challenge by simulating paths under an alternative probability measure, $\bar{\mathbb{P}}$, which makes the rare event more frequent. The estimator is then weighted by the Radon-Nikodym derivative (likelihood ratio) to preserve unbiasedness. The central problem in IS is the selection of an optimal change of measure. While a zero-variance measure theoretically exists, it requires knowledge of the quantity being estimated. Therefore, the practical goal is to construct a measure that is \emph{asymptotically optimal} (or logarithmically efficient), meaning that the second moment of the estimator decays at twice the exponential rate of the first moment as the rarity parameter approaches its limit \cite{glasserman1999asymptotically}.

A powerful framework for constructing such measures is the Large Deviation Principle (LDP). The theory of large deviations provides a variational characterization of the asymptotic decay of rare event probabilities via a \emph{rate function}. The seminal work of Guasoni and Robertson \cite{guasoni2008optimal} and Robertson \cite{robertson2010sample} established the connection between the LDP rate function and the optimal drift adjustment for diffusion processes. Specifically, the optimal change of measure can often be interpreted as shifting the mean of the driving noise to align with the ``most likely path'' (the minimizer of the rate function) that leads to the rare event.

\paragraph{Existing Approaches and Limitations}
In the specific context of the Heston model, the asymptotic behavior of option prices and implied volatility has been extensively studied. Regarding the deep OTM regime, Lee \cite{lee2004moment} established the fundamental link between extreme strikes and moment explosions, while Gulisashvili \cite{gulisashvili2010asymptotic} derived sharp asymptotic formulas for the Heston tail probabilities. For the short-maturity regime, Forde and Jacquier \cite{forde2011small} and Benaim and Friz \cite{benaim2009smile} provided comprehensive analyses of the small-time asymptotics using the G\"artner-Ellis theorem.

Despite these theoretical advances, applying these results to construct efficient IS algorithms remains non-trivial. A common simplification, as seen in Pham \cite{pham2015large}, is to apply a constant drift change of measure. While computationally inexpensive, constant drifts often fail to capture the dynamic dependence between the asset price and its stochastic variance, particularly when the correlation $\rho$ is non-zero. For the Heston model, the optimal change of measure is inherently \emph{state-dependent}: since the diffusion magnitude is proportional to $\sqrt{V_t}$, the driving force required to push the asset price into deep OTM territory must adapt to the current level of instantaneous variance.

\paragraph{Main Contributions}
In this paper, we propose a state-dependent importance sampling scheme for the Heston model constructed via a change of drift that is affine in the square root of the variance. This design preserves the affine structure of the model, ensuring tractability while effectively guiding the path toward the rare event. We rigorously analyze the asymptotic optimality of this scheme in two distinct limiting regimes:

\begin{itemize}
    \item \textbf{Short-Maturity Regime ($T \to 0$):} We leverage the small-time LDP results of Forde and Jacquier \cite{forde2011small} to characterize the decay of the option price. We construct an IS drift based on the solution to a specific Riccati differential equation and prove that the resulting estimator is asymptotically optimal. Our analysis bridges the gap between the analytical cumulant generating functions and the variance of the Monte Carlo estimator.
    
    \item \textbf{Deep OTM Regime with Slow Mean-Reversion Scaling:} This constitutes the primary novelty of our work. Investigating the limit as strike $K \to \infty$ requires a careful scaling of the model parameters. Standard large deviation approaches often assume fixed model parameters, which may not capture the tail behavior adequately when the rarity stems from extreme price levels rather than small time.
    
    We introduce a small-noise parameter $\varepsilon = 1/\log(K/S_0)$ and propose a \emph{slow mean-reversion scaling} where the speed of mean reversion scales as $\delta = \varepsilon^{-2}$. This contrasts with the fast mean-reversion regime ($\delta \sim \varepsilon$) often studied in asymptotic analysis (e.g., \cite{fouque2000derivatives}). Under our proposed scaling, the variance process retains significant fluctuations even in the limit. Unlike the fast mean-reversion regime, this leads to a non-trivial contribution to the large deviation rate function characterized by oscillatory Riccati solutions, requiring a specialized analysis to verify optimality.
\end{itemize}

The remainder of this paper is organized as follows. Section \ref{sec:preli} outlines the model dynamics and the general framework for LDP-based importance sampling. Section \ref{sec:short_maturity} details the analysis for the short-maturity regime. Section \ref{sec:deep_otm} presents the deep OTM regime, introducing the slow mean-reversion scaling and deriving the optimality proofs. Section \ref{sec:numerics} provides numerical evidence verifying the theoretical predictions, and Section \ref{sec:conclusion} concludes.

\section{Problem Formulation and Preliminaries} \label{sec:preli}

\subsection{The Heston Stochastic Volatility Model}

We consider a financial market defined on a filtered probability space $(\Omega, \mathcal{F}, \mathbb{F}=(\mathcal{F}_t)_{t\ge 0}, \mathbb{P})$, where $\mathbb{P}$ represents the risk-neutral probability measure. The filtration $\mathbb{F}$ is generated by a two-dimensional Brownian motion $(W_1,W_2)$.

Let $S_t$ denote the asset price and $V_t$ the instantaneous variance at time $t$. For large deviation analysis, it is convenient to work with the log-price  $X_t \coloneqq \log S_t$. Assuming a zero risk-free rate ($r=0$) without loss of generality, the dynamics are governed by:
\begin{equation} \label{eq:heston_dynamics}
\begin{cases}
    dX_t = - \frac{1}{2}V_t dt + \sqrt{V_t} \left( \rho \, dW^1_t + \bar\rho \, dW^2_t \right), \quad &X_0 = \log S_0, \\
    dV_t = \kappa(\theta - V_t) \, dt + \sigma \sqrt{V_t} \, dW^1_t, & V_0 = v_0 > 0,
\end{cases}
\end{equation}
The correlation between the asset price and its variance is captured by $\rho \in (-1, 1)$, and we define $\bar\rho \coloneqq \sqrt{1 - \rho^2}$. The variance process parameters are strictly positive: $\kappa$ is the mean-reversion speed, $\theta$ is the long-run mean, and $\sigma$ is the volatility of volatility. We assume the Feller condition $ 2\kappa\theta \ge \sigma^2$ holds to ensure strict positivity of $V_t$. 

We aim to price a European call option with strike price $K$ and maturity $T$. The price is given by $C(K, T) = \mathbb{E}^{\mathbb{P}} \left[ (S_T - K)^+ \right]$. In the limiting regimes of short maturity ($T \to 0$) or deep out-of-the-money ($K \to \infty$), the probability $\mathbb{P}\{S_T > K\}$ decays exponentially, rendering standard Monte Carlo simulation inefficient. 

\subsection{Large Deviation Principle}

Our asymptotic analysis relies on the framework of Large Deviation Principles (LDP). We adopt the standard definitions from Dembo and Zeitouni \cite{dembo1998large}. Let $\{Z^\varepsilon\}_{\varepsilon > 0}$ be a family of random variables taking values in a Polish space $\mathcal{X}$ (in our context, $\mathcal{X} = \mathbb{R}^d$).

\begin{definition}[Large Deviation Principle]
The family $\{Z^\varepsilon\}$ satisfies a Large Deviation Principle with speed $\varepsilon$ and rate function $\Lambda: \mathcal{X} \to [0, \infty]$ if $\Lambda$ is lower semicontinuous and: 
\begin{enumerate}
    \item For any closed set $F \subseteq \mathcal{X}$,
    \begin{equation*}
        \limsup_{\varepsilon \to 0} \varepsilon \log \mathbb{P}(Z^\varepsilon \in F) \le -\inf_{x \in F} \Lambda(x).
    \end{equation*}
    \item For any open set $G \subseteq \mathcal{X}$,
    \begin{equation*}
        \liminf_{\varepsilon \to 0} \varepsilon \log \mathbb{P}(Z^\varepsilon \in G) \ge -\inf_{x \in G} \Lambda(x).
    \end{equation*}
\end{enumerate}
Furthermore, $\Lambda$ is called a \emph{good rate function} if its level sets $\{x \in \mathcal{X} : \Lambda(x) \le \alpha\}$ are compact for all $\alpha \ge 0$. 
\end{definition}

In many applications involving stochastic differential equations, the rate function is not derived directly from the definition but via the Gärtner-Ellis theorem, which relates the LDP to the limiting behavior of the cumulant generating function.

\begin{theorem}[Gärtner-Ellis Theorem] \label{thm:gartner_ellis}
If the limiting scaled cumulant generating function (SCGF)
\begin{equation*}
    \Gamma(\lambda) \coloneqq  \lim_{\varepsilon \to 0} \varepsilon \log \mathbb{E}\left[ \exp\left( \frac{\langle \lambda, Z^\varepsilon \rangle}{\varepsilon} \right) \right]
\end{equation*}
exists and is essentially smooth. Let $\mathcal{D}_\Gamma \coloneqq \{ \lambda \in \mathbb{R}^d : \Gamma(\lambda) < \infty \}$ denote the effective domain of $\Gamma$. Then, $\{Z^\varepsilon\}$ satisfies an LDP with a good rate function $\Lambda(x)$ given by the Fenchel-Legendre transform of $\Gamma(\lambda)$: 
\begin{equation} \label{eq:fenchel_legendre}
    \Lambda(x) = \sup_{\lambda \in {D}_\Gamma} \{ \langle \lambda, x \rangle - \Gamma(\lambda) \}.
\end{equation}
\end{theorem}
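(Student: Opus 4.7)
The plan is to establish the upper and lower LDP bounds separately, following the classical Gärtner--Ellis route: the upper bound comes from an exponential Chebyshev inequality, while the lower bound requires a change-of-measure argument calibrated to the exposing hyperplane at each point of the open set.

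For the upper bound I would first treat compact sets. For any $x \in \mathbb{R}^d$ and $\lambda \in \mathcal{D}_\Gamma$, the exponential Chebyshev inequality applied to the half-space $\{y : \langle\lambda, y - x\rangle \ge 0\}$ gives
\[
\varepsilon \log \mathbb{P}(\langle\lambda, Z^\varepsilon\rangle \ge \langle\lambda, x\rangle) \;\le\; -\langle\lambda, x\rangle + \varepsilon \log \mathbb{E}\bigl[\mathrm{e}^{\langle\lambda, Z^\varepsilon\rangle/\varepsilon}\bigr].
\]
Passing to the limit and optimizing over $\lambda$ produces the local bound $-\Lambda(x)$. For a compact set $F$, I would cover $F$ by finitely many such half-spaces whose attained Legendre values approach $\inf_{x \in F}\Lambda(x)$, and conclude by the union bound. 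To extend from compact to arbitrary closed $F$, I would verify exponential tightness of $\{Z^\varepsilon\}$ using the fact that $0 \in \mathrm{int}(\mathcal{D}_\Gamma)$ (a mild consequence of essential smoothness), which forces the tails of $Z^\varepsilon$ to decay super-exponentially and lets the mass on large radii be discarded.

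For the lower bound it is enough to show, for every open $G$ and every $x \in G$ with $\Lambda(x) < \infty$,
\[
\liminf_{\varepsilon\to 0} \varepsilon \log \mathbb{P}(Z^\varepsilon \in G) \;\ge\; -\Lambda(x).
\]
When $x$ is an \emph{exposed point} of $\Lambda$ with exposing vector $\lambda^* = (\nabla\Gamma)^{-1}(x)$, I would work under the tilted measure
\[
\frac{\mathrm{d}\mathbb{Q}^\varepsilon}{\mathrm{d}\mathbb{P}} \;=\; \exp\!\Bigl(\tfrac{1}{\varepsilon}\langle\lambda^*, Z^\varepsilon\rangle - \tfrac{1}{\varepsilon}\Gamma^\varepsilon(\lambda^*)\Bigr),
\]
where $\Gamma^\varepsilon(\lambda) \coloneqq \varepsilon\log\mathbb{E}[\mathrm{e}^{\langle\lambda, Z^\varepsilon\rangle/\varepsilon}]$. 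Convergence $\Gamma^\varepsilon \to \Gamma$, together with convexity, forces $Z^\varepsilon$ under $\mathbb{Q}^\varepsilon$ to concentrate at $x = \nabla\Gamma(\lambda^*)$, so $\mathbb{Q}^\varepsilon(Z^\varepsilon \in B(x,\delta)) \to 1$ for each small ball $B(x,\delta) \subset G$. Inverting the Radon--Nikodym derivative on this ball and letting $\delta \to 0$ yields the required bound with rate $\langle\lambda^*, x\rangle - \Gamma(\lambda^*) = \Lambda(x)$.

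The main obstacle is handling points $x \in G$ that are \emph{not} exposed by any $\lambda \in \mathcal{D}_\Gamma$. This is precisely where the full strength of \emph{essential smoothness}---differentiability throughout $\mathrm{int}(\mathcal{D}_\Gamma)$ together with steepness at its boundary---is required: it guarantees that the exposed points of $\Lambda$ are dense in its effective domain, with $\Lambda$ continuous along approximating sequences. I would then pick exposed $x_n \to x$ with $\Lambda(x_n) \to \Lambda(x)$, apply the exposed-point lower bound at each $x_n$ (each of which lies in $G$ for $n$ large), and pass to the limit. This density-plus-continuity step---usually packaged as the ``exposed points lemma''---is the delicate technical piece and is exactly the reason the essential smoothness hypothesis appears in the statement.
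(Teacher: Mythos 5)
The paper does not prove this theorem; it is stated verbatim as a preliminary and attributed to the classical literature (Dembo--Zeitouni \cite{dembo1998large}), so there is no in-paper argument to compare against. Your sketch reproduces the broad outline of the textbook proof correctly: exponential Chebyshev plus a finite half-space cover for the upper bound on compacts, exponential tightness to pass to closed sets, exponential tilting at exposed points for the lower bound, and a density-of-exposed-points argument (Rockafellar's lemma) driven by essential smoothness to reach the remaining points. Two technical cautions are worth noting against the Dembo--Zeitouni version. First, exponential tightness requires $0 \in \mathrm{int}(\mathcal{D}_\Gamma)$, which is a \emph{standing hypothesis} of the standard theorem, not a consequence of essential smoothness by itself: essential smoothness only requires that $\mathrm{int}(\mathcal{D}_\Gamma)$ be nonempty. (The paper's statement elides this hypothesis too, so your sketch inherits that omission.) Second, the claim that convergence $\Gamma^\varepsilon \to \Gamma$ together with convexity ``forces'' concentration of $Z^\varepsilon$ under the tilted measure $\mathbb{Q}^\varepsilon$ at $\nabla\Gamma(\lambda^*)$ is itself a small but nontrivial sub-argument: one applies the already-proved upper bound to the complement of $B(x,\delta)$ under the tilted family, whose SCGF is $\lambda \mapsto \Gamma(\lambda + \lambda^*) - \Gamma(\lambda^*)$, and uses differentiability at $\lambda^*$ to show the associated rate is strictly positive off the ball. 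In a full proof this step deserves more than a one-line assertion.
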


This theorem is central to our work. In Section \ref{sec:short_maturity}, the scaling parameter is maturity $T$, while in Section \ref{sec:deep_otm}, it is the inverse log-moneyness. 

\subsection{Importance Sampling Framework} \label{sec:is_framework}

To reduce the variance of the Monte Carlo estimator for call option pricing, we employ the importance sampling (IS) technique. This involves simulating sample paths under an alternative probability measure $\bar{\mathbb{P}}$. By changing the measure, we aim to increase the frequency of the rare event, thereby improving the efficiency of the estimator. The Radon-Nikodym derivative, or likelihood ratio, which relates the two measures, is given by $L_T \coloneqq \frac{d\mathbb{P}}{d\bar{\mathbb{P}}}|_{\mathcal{F}_T}$. Using this change of measure, the price of a European call option can be expressed as an expectation under $\bar{\mathbb{P}}$:
\begin{equation}
    C(K, T) = \mathbb{E}^{\bar{\mathbb{P}}} \left[ (S_T - K)^+ L_T \right].
\end{equation}
While this estimator remains unbiased, its variance is governed by the second moment $\mathbb{E}^{\bar{\mathbb{P}}} [ ((S_T - K)^+ L_T)^2 ]$. Minimizing this second moment is the primary objective of our IS strategy.

\paragraph{Asymptotic Optimality.}
An IS estimator is considered \emph{asymptotically optimal} if the second moment of the estimator decays at twice the exponential rate of the first moment. Formally, this condition is satisfied if:
\begin{equation} \label{eq:asymptotic_optimality}
    \lim_{\varepsilon \to 0} \varepsilon \log \mathbb{E}^{\bar{\mathbb{P}}} [ ((S_T - K)^+ L_T)^2 ] = 2 \lim_{\varepsilon \to 0} \varepsilon \log \mathbb{E}^{\bar{\mathbb{P}}} \left[ (S_T - K)^+ L_T \right].
\end{equation}
This optimality criterion ensures that the relative error of the estimator does not grow exponentially as the event becomes rarer, effectively bounding the computational cost required for accurate pricing.

\paragraph{Change of Measure and Drift Design.} 
We focus on changes of measure generated by shifting the drift of the underlying Brownian motions. The likelihood ratio associated with the drift process $h_t$ is given by:
\begin{equation} \label{eq:likelihood_ratio_def}
    L_T = \exp \left( - \int_0^T h_t \cdot d\bar{W}_t - \frac{1}{2} \int_0^T \|h_t\|^2 \, dt \right).
\end{equation}
For the Heston model, the diffusion scale is driven by the instantaneous variance $V_t$. Therefore, a constant drift is often insufficient to capture the dynamics of the rare event. To ensure the drift adjustment scales appropriately with the volatility fluctuations, we propose a state-dependent drift of the form:
\begin{equation} \label{eq:drift_ansatz}
    h^{(2)}_t = \lambda \sqrt{V_t}, \quad h^{(1)}_t = 0,
\end{equation}
where $\lambda$ is a constant to be determined. Crucially, we restrict the tilting to the Brownian motion $W^2$. This specific choice is strategic: by making the drift affine in $\sqrt{V_t}$, we preserve the affine structure of the Heston dynamics.

\subsection{Construction of the Proposed Measure} \label{sec:proposed_measure}

Our design is to construct a measure that shifts the expected asset price at maturity to be consistent with the strike price $K$. We define the drift process $h_t$ specifically as:
\begin{equation} \label{eq:specific_drift}
    h^{(1)}_t = 0, \qquad h^{(2)}_t = -\frac{\bar{h}}{\bar{\rho}} \sqrt{V_t},
\end{equation}
where $\bar{h}$ is a constant parameter. The scaling factor $1/\bar{\rho}$ is included to simplify the algebraic terms in the Riccati equations that follow.

By Girsanov's theorem, the processes defined by $d\bar{W}^i_t = dW^i_t - h^{(i)}_t \, dt$ for $i=1, 2$ are standard Brownian motions under the new measure $\bar{\mathbb{P}}$. Substituting these into the original log-price dynamics, the dynamics of the log-price $X_t$ under $\bar{\mathbb{P}}$ become:
\begin{equation} \label{eq:tilted_dynamics}
    dX_t = \left( -\frac{1}{2} - \bar{h} \right) V_t \, dt + \sqrt{V_t} \left( \rho \, d\bar{W}^1_t + \bar{\rho} \, d\bar{W}^2_t \right).
\end{equation}

\paragraph{Heuristic for Choosing $\bar{h}$.}
To determine the optimal value, we choose $\bar{h}$ such that the expected log-price at maturity under the simulation measure $\bar{\mathbb{P}}$ approximates the log-strike price, i.e., $\mathbb{E}^{\bar{\mathbb{P}}}[X_T] \approx \log K$. 
From the tilted dynamics in \eqref{eq:tilted_dynamics}, the expected log-price is given by:
\begin{equation}
    \mathbb{E}^{\bar{\mathbb{P}}}[X_T] = X_0 + \left( -\frac{1}{2} - \bar{h} \right) \mathbb{E}^{\bar{\mathbb{P}}}\left[ \int_0^T V_t \, dt \right].
\end{equation}
Assuming the variance process $V_t$ remains close to its long-term mean $\theta$ over the time horizon, we can approximate the expected integrated variance as $\mathbb{E}^{\bar{\mathbb{P}}}[\int_0^T V_t \, dt] \approx \theta T$. Furthermore, since we are dealing with rare events that require a significant drift adjustment, the term $-\frac{1}{2}$ is negligible compared to $\bar{h}$. Under these approximations, the condition $\mathbb{E}^{\bar{\mathbb{P}}}[X_T] \approx \log K$ yields:
\begin{equation} \label{eq:heuristic_h}
    \bar{h} = \frac{\log(S_0/K)}{\theta T}.
\end{equation}
Finally, the Radon-Nikodym derivative associated with this specific change of measure is:
\begin{equation} \label{eq:likelihood_ratio}
    Q(\bar{h}) \coloneqq \frac{d\bar{\mathbb{P}}}{d\mathbb{P}} \Bigg|_T = \exp\left( \int_0^T h^{(2)}_t \, dW^2_t - \frac{1}{2} \int_0^T \left(h^{(2)}_t\right)^2 \, dt \right).
\end{equation}
Our proposed IS estimator for the call option price is thus defined as $(S_T - K)^+ Q(\bar{h})^{-1}$.

\section{Short-Maturity Asymptotics ($T \to 0$)} \label{sec:short_maturity}

We consider the limit as maturity $T \to 0$, while the log-moneyness $k \coloneqq \log(K/S_0) > 0$ remains fixed. Our objective is to demonstrate that the proposed IS estimator achieves \emph{asymptotic optimality}. Let $P_1(T) \coloneqq \mathbb{E}^{\mathbb{P}}[(S_T - K)^+]$ denote the true option price, and let $P_2(T; \bar{h})$ denote the second moment of our IS estimator under the measure defined in Section \ref{sec:proposed_measure}. Recall from \eqref{eq:asymptotic_optimality} that asymptotic optimality requires:
\begin{equation} \label{eq:optimality_condition_sm}
    \lim_{T \to 0} T \log P_2(T; \bar{h}) = 2 \lim_{T \to 0} T \log P_1(T).
\end{equation}

\subsection{First Moment Analysis} \label{sec:sm_first_moment}

The asymptotic behavior of the call option price is governed by the large deviation principle of the log-price process $X_t$. According to the G\"artner-Ellis theorem (Theorem \ref{thm:gartner_ellis}), the large deviation behavior of $X_t - X_0$ is determined by the limiting SCGF:
\begin{equation}
    \Gamma_1(p) \coloneqq \lim_{T \to 0} T \log \mathbb{E}^{\mathbb{P}}\left[ \exp\left( \frac{p}{T} (X_T - X_0) \right) \right].
\end{equation}

For the Heston model, the explicit form of this limit was derived by Forde and Jacquier \cite{forde2011small}. We summarize their result in the following lemma.

\begin{lemma}[Forde and Jacquier \cite{forde2011small}] \label{lem:forde_jacquier}
The limiting SCGF $\Gamma_1(p)$ for the Heston model exists and is given by:
\begin{equation} \label{eq:Gamma_1}
    \Gamma_1(p) = \frac{v_0 p}{\sigma \left( -\rho + \bar{\rho} \cot\left( \frac{\sigma \bar{\rho} p}{2} \right) \right)}.
\end{equation}
The function $\Gamma_1(p)$ is finite and differentiable on the effective domain $\mathcal{D}_{\Gamma_1} = (p_-, p_+)$. The boundaries are given as follows:
\begin{itemize}
    \item \textbf{Case $\rho < 0$:}
    \begin{equation*}
        p_- = \frac{2}{\sigma \bar{\rho}} \arctan\left(\frac{\bar{\rho}}{\rho}\right), \quad 
        p_+ = \frac{2}{\sigma \bar{\rho}} \left( \pi + \arctan\left(\frac{\bar{\rho}}{\rho}\right) \right).
    \end{equation*}
    \item \textbf{Case $\rho = 0$:}
    \begin{equation*}
        p_- = -\frac{\pi}{\sigma}, \quad 
        p_+ = \frac{\pi}{\sigma}.
    \end{equation*}
    \item \textbf{Case $\rho > 0$:}
    \begin{equation*}
        p_- = \frac{2}{\sigma \bar{\rho}} \left( -\pi + \arctan\left(\frac{\bar{\rho}}{\rho}\right) \right), \quad 
        p_+ = \frac{2}{\sigma \bar{\rho}} \arctan\left(\frac{\bar{\rho}}{\rho}\right).
    \end{equation*}
\end{itemize}
\end{lemma}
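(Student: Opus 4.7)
The plan is to exploit the affine structure of the Heston model to express the moment generating function of $X_T - X_0$ in closed exponential-affine form, then analyze the short-maturity scaling $u = p/T$ through a suitably rescaled Riccati equation. Applying a standard exponential-affine ansatz to the Feynman--Kac PDE associated with the generator of $(X, V)$ yields
\begin{equation*}
    \mathbb{E}\bigl[\exp(u(X_T - X_0))\bigr] = \exp\bigl(A(T, u) + B(T, u)\, v_0\bigr),
\end{equation*}
where $B$ solves the Riccati ODE
\begin{equation*}
    \partial_s B = \tfrac{1}{2}(u^2 - u) + (\rho\sigma u - \kappa)\, B + \tfrac{1}{2}\sigma^2 B^2,\quad B(0, u) = 0,
\end{equation*}
and $A(s, u) = \kappa\theta \int_0^s B(r, u)\, dr$.

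The key observation is the scaling behaviour under $u = p/T$. Introducing the rescaled unknown $\tilde{B}(\tau) \coloneqq T\, B(T\tau, p/T)$, a direct computation reduces the Riccati to
\begin{equation*}
    \tilde{B}'(\tau) = \tfrac{1}{2} p^2 + \rho\sigma p\, \tilde{B}(\tau) + \tfrac{1}{2}\sigma^2 \tilde{B}(\tau)^2 - T\bigl(\tfrac{p}{2} + \kappa \tilde{B}(\tau)\bigr),\quad \tilde{B}(0) = 0,
\end{equation*}
so that as $T \to 0$ the equation collapses to an autonomous Riccati whose coefficients depend only on $p$; denote its solution by $\beta(\tau, p)$. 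Simultaneously $A(T, p/T) = \kappa\theta \int_0^1 \tilde{B}(\tau)\, d\tau$ stays of order one and is killed by the explicit prefactor $T$ in the definition of $\Gamma_1$. Consequently $\Gamma_1(p) = v_0\, \beta(1, p)$.

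Solving the autonomous Riccati in closed form then finishes the explicit computation. Its reduced discriminant equals $(\rho\sigma p)^2 - \sigma^2 p^2 = -\sigma^2 \bar\rho^2 p^2 < 0$, which forces a trigonometric rather than hyperbolic branch. A standard separation-of-variables integration, followed by an angle-addition identity that rewrites $\tan$ in terms of $\cot$, reproduces the expression in \eqref{eq:Gamma_1}. The effective domain $(p_-, p_+)$ is then the largest open interval around the origin on which $-\rho + \bar\rho\cot(\sigma\bar\rho p / 2) \neq 0$: setting the denominator to zero gives $\tan(\sigma\bar\rho p/2) = \bar\rho/\rho$, and identifying the nearest branch on each side of the origin reproduces the three sign-dependent formulas for $p_\pm$.

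The main technical obstacle is the rigorous interchange of the small-$T$ limit with the Riccati flow on the full interval $[0, 1]$, since Riccati equations generically explode in finite time. I would address this via a continuous-dependence argument: for $p$ in any compact subinterval of $(p_-, p_+)$, the limit solution $\beta(\cdot, p)$ is bounded on $[0, 1]$, and a Gronwall-type comparison between the perturbed and autonomous equations provides a uniform a~priori bound on $\tilde{B}(\cdot)$ for $T$ sufficiently small, ruling out earlier blowup. Essential smoothness of $\Gamma_1$, needed for the downstream application of G\"artner--Ellis, then follows from the smoothness of $\cot$ on $(p_-, p_+)$ together with the steepness condition $|\Gamma_1'(p)| \to \infty$ as $p \to p_\pm$.
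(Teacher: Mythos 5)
The paper does not prove this lemma itself—it cites it directly from Forde and Jacquier and then uses the result downstream—so there is no in-text proof to compare against. That said, your derivation reproduces exactly the Riccati-based methodology the paper employs in Appendices A--D for the companion quantities, and it is correct in all essential steps: the affine ansatz reduces the log-MGF to the Riccati for $B(s,u)$; the substitution $u=p/T$, $\tilde B(\tau) = T\,B(T\tau,p/T)$ yields $\tilde B' = \tfrac12 p^2 + \rho\sigma p\,\tilde B + \tfrac12\sigma^2\tilde B^2 - T(\tfrac p2 + \kappa\tilde B)$ with the $O(T)$ remainder vanishing; completing the square and separating variables gives $\beta(1,p) = \tfrac{\bar\rho p}{\sigma}\tan\bigl(\tfrac{\sigma\bar\rho p}{2} + \arctan(\rho/\bar\rho)\bigr) - \tfrac{\rho p}{\sigma}$, and the angle-addition identity collapses this to the stated $\cot$ form. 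The boundaries $p_\pm$ you obtain by solving $\tfrac{\sigma\bar\rho p}{2} + \arctan(\rho/\bar\rho) = \pm\tfrac\pi2$ reduce, via $\arctan x + \arctan(1/x) = \pm\tfrac\pi2$, to the three sign-dependent expressions in the lemma. The one caveat worth flagging: the domain characterization should rest on the Riccati blow-up time reaching $\tau=1$ rather than on ``where the denominator of the closed-form expression vanishes''—these coincide here because the first pole of $\tan(\tfrac{\sigma\bar\rho p}{2}+\alpha)$ on each side of the origin is precisely where the flow escapes to infinity within $[0,1]$, but the Riccati argument (which you do invoke in the Gronwall paragraph) is what justifies the formula only holding on that interval and forces $\Gamma_1(p)=+\infty$ outside it. With that reading, your proposal is a faithful reconstruction of the Forde--Jacquier argument and is consistent with the paper's own appendix-level technique.
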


Using the G\"artner-Ellis theorem, the sequence of random variables $\{X_T - X_0\}$ satisfies an LDP with speed $T$ and a good rate function $\Lambda_1(x)$ defined by the Fenchel-Legendre transform:
\begin{equation} \label{eq:Lambda_1}
    \Lambda_1(x) = \sup_{p \in (p_-, p_+)} \{ px - \Gamma_1(p) \}.
\end{equation}
This rate function characterizes the probability of the log-price exceeding a threshold. We can now extend this probability estimate to the option price expectation.

\begin{proposition}[First Moment Decay Rate] \label{prop:first_moment_decay}
By  Corollary 2.1 in Forde and Jacquier \cite{forde2011small}, the short-maturity asymptotic behavior of the European call option price is given by:
\begin{equation}
    \lim_{T \to 0} T \log P_1(T) = -\Lambda_1(k).
\end{equation}
\end{proposition}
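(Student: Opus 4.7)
The plan is to prove two matching exponential bounds for $T \log P_1(T)$, using the LDP for $\{X_T - X_0\}$ at speed $T$ with good rate function $\Lambda_1$, which follows from Lemma \ref{lem:forde_jacquier} and the G\"artner-Ellis theorem. Since $(S_T - K)^+$ is a multiplicative adjustment of the indicator of $\{X_T - X_0 > k\}$ by a weight that is exponential in $X_T$, the task reduces to upgrading the probability-level LDP to the expectation-level statement. I would handle the upper bound via H\"older combined with short-time moment control, and the lower bound by a pointwise truncation of the payoff.

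For the upper bound, start with $(S_T - K)^+ \leq S_T \mathbf{1}_{\{X_T - X_0 > k\}}$ and apply H\"older's inequality with exponents $1+\delta$ and $(1+\delta)/\delta$:
\[
P_1(T) \leq \mathbb{E}[S_T^{1+\delta}]^{1/(1+\delta)} \, \mathbb{P}(X_T - X_0 > k)^{\delta/(1+\delta)}.
\]
For $T$ sufficiently small the moment $\mathbb{E}[S_T^{1+\delta}]$ is finite (the Heston moment-explosion boundary diverges as $T \to 0$) and tends to $S_0^{1+\delta}$, so its contribution to $T \log P_1(T)$ vanishes in the limit. Invoking the LDP upper bound on the closed set $[k, \infty)$ and then sending $\delta \to \infty$ yields
\[
\limsup_{T \to 0} T \log P_1(T) \leq -\Lambda_1(k).
\]

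For the lower bound, fix $\epsilon > 0$ and use the elementary pointwise bound
\[
(S_T - K)^+ \geq K(e^\epsilon - 1) \, \mathbf{1}_{\{X_T - X_0 > k + \epsilon\}}.
\]
The LDP lower bound applied to the open set $(k + \epsilon, \infty)$ gives
\[
\liminf_{T \to 0} T \log P_1(T) \geq -\inf_{x > k + \epsilon} \Lambda_1(x) = -\Lambda_1(k + \epsilon),
\]
where the last equality uses monotonicity of $\Lambda_1$ on $(0, \infty)$ (a short computation from \eqref{eq:Gamma_1} shows $\Gamma_1(p) \sim v_0 p^2/2$ near $0$, so $\Gamma_1'(0) = 0$). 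Letting $\epsilon \to 0$ and using continuity of $\Lambda_1$ on the interior of its effective domain yields the matching bound $-\Lambda_1(k)$, completing the proof.

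The main obstacle is the moment control in the H\"older step: one must certify that for every $\delta > 0$ there exists $T_\delta > 0$ with $\sup_{0 < T \leq T_\delta} \mathbb{E}[S_T^{1+\delta}] < \infty$. This follows from the explicit affine formula for Heston moments, which shows that the critical moment order $p^*(T)$ diverges as $T \to 0$. A cleaner alternative is to apply Varadhan's lemma directly to the continuous functional $x \mapsto (S_0 e^x - K)^+$ after verifying the standard super-exponential tail condition; this bypasses the H\"older argument but shifts the technical burden onto the integrability hypothesis of Varadhan. Either route, already carried out in Forde and Jacquier \cite{forde2011small}, delivers the stated decay rate.
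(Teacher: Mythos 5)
Your proof is correct and matches the approach the paper implicitly relies on: the paper itself offers no argument here, deferring entirely to Corollary 2.1 of Forde and Jacquier, and your two-sided bound (payoff truncation plus the LDP lower bound below, H\"older with diverging exponent plus Heston moment control above) is exactly the structure of that cited result and is also the same argument the paper spells out for the parallel deep-OTM statement, Proposition \ref{prop:dotm_first_moment_decay}. The only thing worth flagging for completeness is the technical point you already acknowledge, namely that $T\log\E[S_T^{1+\delta}]\to 0$ requires the moment-explosion time $T^*(1+\delta)$ to be strictly positive for each fixed $\delta$, which indeed holds for Heston, so the argument is sound.
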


This proposition establishes the baseline for our efficiency analysis. To prove asymptotic optimality, we must demonstrate that the second moment of our estimator decays exactly at the rate $-2\Lambda_1(k)$.

\subsection{Second Moment Analysis} \label{sec:sm_second_moment}

We now turn to the analysis of the second moment of the importance sampling estimator, denoted by:
\begin{equation}
    P_2(T; \bar{h}) \coloneqq \mathbb{E}^{\bar{\mathbb{P}}} \left[ \left( (S_T - K)^+ Q(\bar{h})^{-1} \right)^2 \right] = \mathbb{E}^{\mathbb{P}} \left[ \left( (S_T - K)^+ \right)^2 Q(\bar{h})^{-1} \right].
\end{equation}
Using the inequality $(S_T - K)^+ \le S_T \cdot \mathbf{1}_{\{S_T > K\}}$, we establish an upper bound:
\begin{equation} \label{eq:P2_upper_bound_raw}
    P_2(T; \bar{h}) \le S_0^2 \, \mathbb{E}^{\mathbb{P}} \left[ \exp\left( 2(X_T - X_0) \right) Q(\bar{h})^{-1} \mathbf{1}_{\{X_T - X_0 > k\}} \right].
\end{equation}
Substituting the specific drift $h_t = -\frac{\bar{h}}{\bar{\rho}}\sqrt{V_t}$ and the log-price dynamics into \eqref{eq:P2_upper_bound_raw}, we obtain:
\begin{align} \label{eq:P2_expanded}
    P_2(T; \bar{h}) \le S_0^2 \, \mathbb{E}^{\mathbb{P}} \Bigg[ \exp\Bigg( & \left( -1 + \frac{\bar{h}^2}{2\bar{\rho}^2} \right) \int_0^T V_t \, dt \nonumber \\
    & + 2\rho \int_0^T \sqrt{V_t} \, dW^1_t + \left( 2\bar{\rho} + \frac{\bar{h}}{\bar{\rho}} \right) \int_0^T \sqrt{V_t} \, dW^2_t \Bigg) \cdot \mathbf{1}_{\{X_T - X_0 > k\}} \Bigg].
\end{align}

To analyze the expectation, it is convenient to remove the stochastic integrals in the exponential term via a further change of measure. We introduce an auxiliary probability measure $\widetilde{\mathbb{P}}$ defined by the Radon-Nikodym derivative:
\begin{equation} \label{eq:measure_tilde}
    \frac{d\widetilde{\mathbb{P}}}{d\mathbb{P}} \coloneqq \exp\left( \int_0^T 2\rho \sqrt{V_t} dW^1_t + \int_0^T \left( 2\bar{\rho} + \frac{\bar{h}}{\bar{\rho}} \right) \sqrt{V_t} dW^2_t - \frac{1}{2} \int_0^T \eta^2 V_t dt \right),
\end{equation}
where the parameter $\eta^2 \coloneqq (2\rho)^2 + \left( 2\bar{\rho} + \frac{\bar{h}}{\bar{\rho}} \right)^2$. Multiplying by the Girsanov density inside \eqref{eq:P2_expanded}, and collecting the remaining drift terms, we rewrite the second moment bound as:
\begin{equation}
    P_2(T; \bar{h}) \le S_0^2 \, \mathbb{E}^{\widetilde{\mathbb{P}}} \left[ \exp\left( C(\bar{h}) \int_0^T V_t dt \right) \mathbf{1}_{\{X_T - X_0 > k\}} \right],
\end{equation}
where the coefficient $C(\bar{h})$ is given by:
\begin{equation} \label{eq:coeff_C}
    C(\bar{h}) \coloneqq -1 + \frac{\bar{h}^2}{2\bar{\rho}^2} + \frac{1}{2}\eta^2 = 1 + 2\bar{h} + \frac{\bar{h}^2}{\bar{\rho}^2}.
\end{equation}

We now apply H\"older's inequality with conjugate exponents $q, q' > 1$ (i.e., $1/q + 1/q' = 1$) to decouple the integrated variance from the rare event indicator:
\begin{equation} \label{eq:holder_split}
    \log P_2(T; \bar{h}) \le 2 \log S_0 + \underbrace{\frac{1}{q} \log \mathbb{E}^{\widetilde{\mathbb{P}}} \left[ \exp\left( q C(\bar{h}) \int_0^T V_t dt \right) \right]}_{\text{Term I}} + \underbrace{\frac{1}{q'} \log \widetilde{\mathbb{P}}(X_T - X_0 > k)}_{\text{Term II}}.
\end{equation}
We analyze Term I and Term II separately in the limit $T \to 0$.

\paragraph{Term I.}
Under the measure $\widetilde{\mathbb{P}}$, the variance process $V_t$ follows the dynamics:
\begin{equation} \label{eq:tilde_variance_dynamics}
    dV_t = \left( \kappa\theta - \tilde{\kappa} V_t \right) dt + \sigma \sqrt{V_t} \, d\widetilde{W}^1_t,
\end{equation}
where $\tilde{\kappa} \coloneqq \kappa - 2\rho\sigma$. The asymptotic limit is determined by solving the associated Riccati differential equation. In the short-maturity limit, the large drift $\bar{h} = \frac{\log(S_0/K)}{\theta T}$ dominates the coefficient $C(\bar{h})$. This results in an oscillatory solution (see \textbf{Appendix \ref{app:sm_term_I}} for the derivation).

\begin{proposition}[Limit of Integrated Variance Moment]
The asymptotic limit of the Term I in the H\"older decomposition is given by:
\begin{equation} \label{eq:limit_term_I}
    \lim_{T \to 0} T \, (\text{Term I}) = \frac{1}{q} \cdot \frac{v_0 k \sqrt{2q}}{\sigma \theta \bar{\rho}} \tan\left( \frac{\sigma k \sqrt{2q}}{2 \theta \bar{\rho}} \right).
\end{equation}
\end{proposition}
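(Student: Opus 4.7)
The plan is to exploit the affine structure of the CIR process under $\widetilde{\mathbb{P}}$ to write Term I in closed form via a pair of Riccati ODEs, and then perform a careful asymptotic analysis of these ODEs as $T \to 0$. First, under $\widetilde{\mathbb{P}}$, the variance follows the CIR dynamics \eqref{eq:tilde_variance_dynamics} with modified mean-reversion speed $\tilde{\kappa} = \kappa - 2\rho\sigma$. Exploiting the affine representation, I would write
\begin{equation*}
    \mathbb{E}^{\widetilde{\mathbb{P}}}\!\left[\exp\!\left(qC(\bar{h})\int_0^T V_t\,dt\right)\right] = \exp\bigl(\phi(0;T) + \psi(0;T)\,v_0\bigr),
\end{equation*}
where, with $\alpha \coloneqq qC(\bar{h})$, the functions $\phi,\psi$ satisfy the terminal-value Riccati system
\begin{equation*}
    \psi'(t) = \tilde{\kappa}\psi(t) - \tfrac{\sigma^2}{2}\psi(t)^2 - \alpha, \qquad \phi'(t) = -\kappa\theta\,\psi(t), \qquad \psi(T) = \phi(T) = 0.
\end{equation*}

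The next step is the standard linearization $\psi(t) = \frac{2}{\sigma^2}\,y'(t)/y(t)$, which reduces the Riccati equation to the linear second-order ODE $y'' - \tilde{\kappa}\,y' + \tfrac{\sigma^2 \alpha}{2}\,y = 0$. The crux of the argument is tracking the scaling: since $\bar{h} = -k/(\theta T)$, the coefficient $C(\bar h) = 1 + 2\bar h + \bar h^2/\bar\rho^2$ is dominated, as $T\to 0$, by $\bar h^2/\bar\rho^2 \sim k^2/(\theta^2 \bar\rho^2 T^2)$. Hence $\alpha \sim qk^2/(\theta^2\bar\rho^2 T^2)$, which forces the discriminant $\tilde\kappa^2 - 2\sigma^2\alpha$ to be negative for all sufficiently small $T$. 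The characteristic roots become complex, $\tilde\kappa/2 \pm i\omega$ with $\omega = \tfrac{1}{2}\sqrt{2\sigma^2\alpha - \tilde\kappa^2}$, and solving the linear ODE backward from $T$ with terminal data $y(T)=1,\ y'(T)=0$ yields, after routine algebra,
\begin{equation*}
    \psi(0) = \frac{\alpha\,\sin(\omega T)/\omega}{\cos(\omega T) + \tfrac{\tilde{\kappa}}{2\omega}\sin(\omega T)}.
\end{equation*}

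Finally I would pass to the limit. Using $\omega \sim \sigma\sqrt{\alpha/2}$ and the asymptotic $\alpha/\omega \sim \sqrt{2\alpha}/\sigma$, the key quantity $\omega T$ converges to the finite constant $\frac{\sigma k\sqrt{2q}}{2\theta\bar\rho}$, while $\tilde\kappa/(2\omega) \to 0$. Therefore
\begin{equation*}
    T\,v_0\psi(0) \;\longrightarrow\; \frac{v_0 k\sqrt{2q}}{\sigma\theta\bar\rho}\,\tan\!\left(\frac{\sigma k\sqrt{2q}}{2\theta\bar\rho}\right).
\end{equation*}
For the $\phi$-term, $\phi(0) = \kappa\theta \int_0^T \psi(t)\,dt$; since $\psi$ is of order $1/T$ uniformly on $[0,T]$, the integral is $O(1)$ and hence $T\phi(0) = O(T) \to 0$. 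Dividing the sum $\phi(0)+v_0\psi(0)$ by $q$ (the Hölder coefficient) produces exactly the claimed limit \eqref{eq:limit_term_I}.

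The principal technical obstacle is the management of the singular scaling $\alpha \sim 1/T^2$: one must verify that the tangent in the final expression remains finite, which requires the implicit regularity condition $\frac{\sigma k\sqrt{2q}}{2\theta\bar\rho} < \pi/2$ (i.e.\ an upper bound on $q$ tied to $k$). This condition delineates the admissible Hölder exponents in \eqref{eq:holder_split} and must be carried forward when Term I is eventually optimized against Term II; it is also what guarantees that the linearized ODE solution $y(t)$ does not vanish on $[0,T]$, preserving the validity of the $\psi = (2/\sigma^2)y'/y$ substitution. Aside from this, uniformity in $T$ of the $O(T)$ correction terms in $C(\bar h)$ and $\omega T$ is easily handled by a first-order expansion.
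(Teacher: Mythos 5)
Your proposal is correct and follows essentially the same route as the paper's Appendix A.1: exploit the affine structure to reduce Term~I to a Riccati ODE, linearize via the log-derivative substitution, observe that $\bar{h} = -k/(\theta T)$ forces the discriminant negative with frequency $\omega \sim \frac{\sigma k\sqrt{2q}}{2\theta\bar\rho T}$, and pass to the limit with the $\phi$-term vanishing under the $T\log(\cdot)$ scaling. The only cosmetic differences are your use of a terminal-value formulation (hence $\psi = +\tfrac{2}{\sigma^2}y'/y$ rather than the paper's $-\tfrac{2}{\sigma^2}u'/u$) and your explicit closed-form expression for $\psi(0)$ before taking limits; your observation that $\frac{\sigma k\sqrt{2q}}{2\theta\bar\rho} < \pi/2$ is needed for the tangent to be finite is a welcome precision that the paper leaves implicit until the optimization over $q$ in Section~3.3.
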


\paragraph{Term II.}
The second term corresponds to the probability of the rare event under the auxiliary measure $\widetilde{\mathbb{P}}$, which introduces a drift adjustment to the log-price dynamics $X_t$. In the short-maturity limit $T \to 0$, this drift adjustment scales as $O(1/T)$, which is of the same order as the large deviation speed. We characterize this decay via the G\"artner-Ellis theorem. Its explicit form is derived in \textbf{Appendix \ref{app:sm_term_II}}.

\begin{proposition}[Auxiliary SCGF $\Gamma_{II}$]
The limiting scaled cumulant generating function 
\begin{equation*}
    \Gamma_{II}(p) \coloneqq \lim_{T \to 0} T \log \mathbb{E}^{\widetilde{\mathbb{P}}} \left[ \exp\left( \frac{p}{T} (X_T - X_0) \right) \right]
\end{equation*}
is determined by the discriminant $\hat{\Delta}_{II}(p) = \sigma^2 ( p^2 (2\rho^2 - 1) + \frac{2pk}{\theta} )$. Let $p^*_{II} = \frac{2k}{\theta(1-2\rho^2)}$ be the non-zero root of $\hat{\Delta}_{II}(p)=0$, and $I_{II}$ denote the open interval between the roots $0$ and $p^*_{II}$. Furthermore, let $(p_{II,-}, p_{II,+})$ denote the effective domain bounded by the first singularities of the tangent term (where $\sqrt{-\hat{\Delta}_{II}} = \pi$). The explicit form is:
\begin{equation} \label{eq:gamma3_explicit_unified}
    \Gamma_{II}(p) = 
    \begin{dcases}
        \frac{v_0}{\sigma^2} \left( -p \rho \sigma + \sqrt{\hat{\Delta}_{II}} \tanh\left( \frac{\sqrt{\hat{\Delta}_{II}}}{2} \right) \right), 
        & \text{for } p \in I_{II}, \\[10pt]
        -\frac{v_0 p \rho}{\sigma}, 
        & \text{for } p \in \{ 0,p^*_{II}\}, \\[10pt]
        \frac{v_0}{\sigma^2} \left( -p \rho \sigma + \sqrt{-\hat{\Delta}_{II}} \tan\left( \frac{\sqrt{-\hat{\Delta}_{II}}}{2} \right) \right), 
        & \text{for } p \in (p_{II,-}, p_{II,+}) \setminus \bar{I}_{II}.
    \end{dcases}
\end{equation}
\end{proposition}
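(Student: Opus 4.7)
The plan is to exploit the affine structure of the Heston model, which is preserved under $\widetilde{\mathbb{P}}$, and reduce the SCGF computation to a scalar Riccati ODE whose small-maturity limit admits a closed-form solution. First, I would apply Girsanov's theorem to the density \eqref{eq:measure_tilde} to derive the dynamics of $(X_t, V_t)$ under $\widetilde{\mathbb{P}}$: collecting the drift shifts $2\rho\sqrt{V_t}$ and $(2\bar\rho + \bar h/\bar\rho)\sqrt{V_t}$ on the two Brownian motions shows that the variance inherits the modified mean-reversion speed $\tilde\kappa = \kappa - 2\rho\sigma$ from \eqref{eq:tilde_variance_dynamics}, while the log-price acquires a $V_t$-proportional drift with coefficient $\mu_X = \tfrac{3}{2} + \bar h$, the constant $\tfrac{3}{2}$ arising from $2(\rho^2 + \bar\rho^2) - \tfrac{1}{2}$. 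The affine structure then permits the exponential-affine ansatz $\mathbb{E}^{\widetilde{\mathbb{P}}}[\exp(u(X_T - X_0)) \mid V_0 = v_0] = \exp(A(T,u) + B(T,u) v_0)$, and the Feynman--Kac PDE yields the scalar Riccati $B'(s) = \tfrac{\sigma^2}{2}B^2 + (u\rho\sigma - \tilde\kappa)B + \mu_X u + u^2/2$ coupled with $A'(s) = \kappa\theta B(s)$ under $A(0) = B(0) = 0$.

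The key step is the small-maturity rescaling. Substituting $u = p/T$ together with $\bar h = -k/(\theta T)$ from \eqref{eq:heuristic_h} produces Riccati coefficients with singularities of order $1/T$ and $1/T^2$. Introducing $\widetilde B(\tau) := T\,B(T\tau)$ for $\tau \in [0,1]$ absorbs both singular scales simultaneously, and passing to the limit $T \to 0$ yields an autonomous constant-coefficient Riccati of the form $\widetilde B_0'(\tau) = \tfrac{\sigma^2}{2}\widetilde B_0^2 + p\rho\sigma\,\widetilde B_0 + c_0$ with $\widetilde B_0(0) = 0$, whose quadratic discriminant, up to a factor of $\sigma^2$, coincides with $\hat\Delta_{II}(p)$. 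A parallel accounting gives $T \cdot A(T, p/T) = \kappa\theta\int_0^1 \widetilde B(\tau) \,d\tau \cdot T \to 0$, so $\Gamma_{II}(p) = v_0\,\widetilde B_0(1)$. I would then solve this limit Riccati via the standard linearization $\widetilde B_0 = -(2/\sigma^2)\,w'/w$, which transforms it into a second-order linear ODE with characteristic roots involving $\sqrt{\hat\Delta_{II}}/2$. Imposing $w(0) = 1$ and $w'(0) = 0$ and evaluating at $\tau = 1$ yields the hyperbolic solution when $\hat\Delta_{II}(p) > 0$ and the oscillatory solution when $\hat\Delta_{II}(p) < 0$; elementary algebraic simplification then recovers the two principal branches of \eqref{eq:gamma3_explicit_unified}. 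The middle branch $-v_0 p\rho/\sigma$ at $p \in \{0, p^*_{II}\}$ emerges by an L'H\^opital-type expansion at the zeros of $\hat\Delta_{II}$, where numerator and denominator of the tanh/tan representation vanish simultaneously at the same leading rate. The effective domain $(p_{II,-}, p_{II,+})$ is determined by requiring that the tan denominator remain nonzero for all $\tau \in [0,1]$; the boundaries correspond to the first values of $p$ at which $\sqrt{-\hat\Delta_{II}} = \pi$.

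The main obstacle is the \emph{doubly singular} scaling: because both $u = p/T$ and $\bar h = -k/(\theta T)$ diverge as $T \to 0$, their product $\mu_X u$ is of order $1/T^2$, the same magnitude as the $u^2/2$ term contributing to the Riccati constant. The rescaling $\widetilde B = T B$ is designed to absorb both singular scales at once, but one must verify carefully that the residual $O(T)$ drift terms (namely $-T\tilde\kappa\widetilde B$ and the $\tfrac{3}{2}pT$ piece of $T^2 c$) vanish uniformly on compact subsets of the effective domain, and that $\widetilde B_0(\tau)$ itself does not blow up before $\tau = 1$ (which would destroy pointwise convergence of $T\psi(T)$). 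A secondary subtlety is the continuous-extension argument across $\hat\Delta_{II}(p) = 0$ needed to stitch together the three pieces of \eqref{eq:gamma3_explicit_unified}; this reduces to a direct Taylor expansion in $\sqrt{\hat\Delta_{II}}$ near its zeros, showing that the apparent singularity in both the tanh and tan representations is removable and produces precisely the linear-in-$p$ middle branch.
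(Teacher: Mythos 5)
Your plan matches the paper's own derivation in structure: Feynman--Kac on the affine ansatz under $\widetilde{\mathbb{P}}$, the scalar Riccati, linearization via $-\tfrac{2}{\sigma^2}\,w'/w$, and extraction of the $T\to0$ limit after the double rescaling $u=p/T$, $\bar h=-k/(\theta T)$. Your explicit rescaling $\widetilde B(\tau)=T\,B(T\tau)$ and the attention to the vanishing of $T\cdot A(T,p/T)$ and of the residual $O(T)$ drift terms are welcome clarifications of steps the paper leaves implicit.

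There is, however, one gap you cannot leave to faith. You write that the limiting autonomous Riccati $\widetilde B_0'=\tfrac{\sigma^2}{2}\widetilde B_0^2+p\rho\sigma\,\widetilde B_0+c_0$ has a quadratic discriminant that ``up to a factor of $\sigma^2$ coincides with $\hat\Delta_{II}(p)$'' --- but you never compute it, and it does \emph{not}. Your Riccati constant is $c_0=\lim_{T\to0}T^2\bigl(\mu_X\,u+\tfrac{1}{2}u^2\bigr)=-\tfrac{pk}{\theta}+\tfrac{p^2}{2}$, giving discriminant
\begin{equation*}
(p\rho\sigma)^2-4\cdot\tfrac{\sigma^2}{2}\Bigl(-\tfrac{pk}{\theta}+\tfrac{p^2}{2}\Bigr)
=\sigma^2\Bigl(-p^2\bar\rho^2+\tfrac{2pk}{\theta}\Bigr),
\end{equation*}
whereas the proposition states $\hat\Delta_{II}(p)=\sigma^2\bigl(p^2(2\rho^2-1)+\tfrac{2pk}{\theta}\bigr)$. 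These agree only when $\rho=0$. The source of the mismatch is that the paper's Appendix~\ref{app:sm_term_II} uses the potential coefficient $\tfrac{1}{2}\bar\rho^2 u^2$ rather than the full $\tfrac{1}{2}u^2$ in $C_{II,0}(T)$, while at the same time writing $C_{II,1}(T)=\tilde\kappa_{II}-\tfrac{p}{T}\rho\sigma$; that is, it shifts the mean-reversion by the $u\rho\sigma$ cross term (as in a second Girsanov step) but drops the accompanying $\tfrac{1}{2}\rho^2 u^2$ compensator from the constant. Your coefficient $\mu_X u+\tfrac{1}{2}u^2$ is the one produced by the standard affine Heston transform, and it is also what the paper's own deep-OTM derivation (Appendix~\ref{app:dotm_term_II}) uses, where the quadratic piece is $\tfrac{p^2}{2}$ and the resulting discriminant is $\sigma^2\bigl(-p^2\bar\rho^2+\tfrac{2p}{\theta T}\bigr)$. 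So your derivation is internally consistent and consistent with Appendix~\ref{app:dotm_term_II}, but not with the stated $\hat\Delta_{II}$. You must either track down and correct this coefficient or explicitly carry out the discriminant computation and reconcile it with the proposition; asserting the match without checking hides a genuine inconsistency between the claimed formula and the Riccati you wrote down.

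A smaller point: the middle branch at $\{0,p^*_{II}\}$ needs a touch more care than ``L'H\^opital.'' When $\hat\Delta_{II}=0$ the linearized ODE has a repeated root, so $w(\tau)=e^{\lambda\tau}(1-\lambda\tau)$ with $\lambda=p\rho\sigma/2$ under $w(0)=1,\,w'(0)=0$, and $\psi_0(1)=-\tfrac{2}{\sigma^2}\,w'(1)/w(1)=-\tfrac{2}{\sigma^2}\cdot\tfrac{-\lambda^2}{1-\lambda}$, which is \emph{not} $-p\rho/\sigma$ in general; the claimed value $-v_0 p\rho/\sigma$ can only arise as the limit of the tanh/tan branches as $\hat\Delta_{II}\to0$, which requires $\tanh(x)/x\to1$ applied at the degenerate point, and is really a continuity statement about $\Gamma_{II}$ rather than a direct evaluation of the degenerate Riccati at $\tau=1$. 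Make that explicit rather than gesturing at an expansion.
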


By the G\"artner-Ellis theorem, the decay rate is given by the Legendre transform:
\begin{equation} \label{eq:limit_term_II}
    \lim_{T \to 0} T \, (\text{Term II}) = -\frac{1}{q'} \Lambda_{II}(k) = -\frac{1}{q'} \sup_{p \in (p_{II,-}, p_{II,+})} \{ pk - \Gamma_{II}(p) \}.
\end{equation}

Combining \eqref{eq:limit_term_I} and \eqref{eq:limit_term_II}, we obtain the asymptotic upper bound for the second moment:
\begin{equation} \label{eq:final_upper_bound_sm}
    \limsup_{T \to 0} T \log P_2(T; \bar{h}) \le \inf_{q > 1} \left\{ \frac{v_0 k \sqrt{2/q}}{\sigma \theta \bar{\rho}} \tan\left( \frac{\sigma k \sqrt{2q}}{2 \theta \bar{\rho}} \right) - \left(1 - \frac{1}{q}\right) \Lambda_{II}(k) \right\}.
\end{equation}
This variational problem in $q$ allows us to optimize the bound to prove optimality.

\subsection{Asymptotic Optimality Result} \label{sec:sm_optimality}

We are now in a position to prove the main result of this section: the logarithmic efficiency of the proposed importance sampling estimator.

\begin{theorem}[Short-Maturity Asymptotic Optimality]
Let $P_1(T)$ and $P_2(T; \bar{h})$ be the first and second moments of the IS estimator. Then:
\begin{equation}
    \lim_{T \to 0} T \log \left( \frac{P_2(T; \bar{h})}{P_1(T)^2} \right) = 0.
\end{equation}
\end{theorem}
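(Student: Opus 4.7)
The plan is to sandwich $T\log(P_2/P_1^{2})$ between zero from below and zero from above. The lower bound is immediate from Jensen's inequality: since the IS estimator $Y \coloneqq (S_T - K)^{+} Q(\bar h)^{-1}$ is unbiased, $\mathbb{E}^{\bar{\mathbb{P}}}[Y] = P_1(T)$, hence $P_2(T; \bar h) = \mathbb{E}^{\bar{\mathbb{P}}}[Y^{2}] \ge P_1(T)^{2}$ for every $T > 0$. Taking logarithms, multiplying by $T$, and letting $T \to 0$ yields $\liminf_{T\to 0} T\log(P_2/P_1^{2}) \ge 0$ at once.

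The substantive half is the matching upper bound. By Proposition~\ref{prop:first_moment_decay} this reduces to showing $\limsup_{T\to 0} T\log P_2(T; \bar h) \le -2\Lambda_1(k)$, and by the H\"older decomposition in \eqref{eq:final_upper_bound_sm} it suffices to exhibit some $q^{*} > 1$ for which the variational expression is bounded above by $-2\Lambda_1(k)$. My strategy is to locate $q^{*}$ through the Fenchel duality underlying $\Lambda_1$: letting $p^{*}$ be the maximizer in \eqref{eq:Lambda_1}, so that $\Gamma_1'(p^{*}) = k$ and $\Lambda_1(k) = p^{*}k - \Gamma_1(p^{*})$, the heuristic drift $\bar h = -k/(\theta T)$ from \eqref{eq:heuristic_h} is constructed precisely to realize in the small-$T$ limit the exponential tilt by $p^{*}$ that implements the most-likely-path change of measure. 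I would then expand $\Lambda_{II}(k)$ by Fenchel-transforming the $\tan$-branch of \eqref{eq:gamma3_explicit_unified}, identify its maximizer in closed form, and match the resulting tangent factor against the one produced by Term~I via the trigonometric identity $2\cot(2x) = \cot(x) - \tan(x)$. This identity is the algebraic engine through which the two H\"older contributions should collapse into a single expression coinciding with twice the Heston SCGF $\Gamma_1(p^{*})$ minus $2p^{*}k$, thereby yielding exactly $-2\Lambda_1(k)$.

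The main obstacle is to carry out the $q$-optimization rigorously and confirm tightness of the bound. Three technical points demand attention: (i) the optimal dual variable must lie strictly inside the effective domain $(p_{II,-}, p_{II,+})$ of $\Gamma_{II}$ so that the oscillatory $\tan$-branch is active and no moment explosion ruins the Riccati analysis; (ii) the tangent appearing in Term~I has singularities that constrain $q^{*}$ to an explicit sub-interval and impose a compatibility condition on $k$, $\sigma$, $\bar\rho$ analogous to the moment-explosion threshold of Lee; (iii) the approximations used to derive $\bar h$ in \eqref{eq:heuristic_h}---in particular $\mathbb{E}^{\bar{\mathbb{P}}}[\int_0^T V_t\,dt] \approx \theta T$ and the dropping of the $-\tfrac{1}{2}$ term---must be shown not to contaminate the leading-order LDP rate, which should follow from standard moment estimates for the CIR process under the tilted measure combined with continuity of $\Gamma_1$ on its effective domain. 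Once these three points are settled and the infimum over $q$ is verified to attain $-2\Lambda_1(k)$, combining the Jensen lower bound with the sharp variational upper bound closes the argument and establishes logarithmic efficiency.
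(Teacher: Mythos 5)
Your proposal follows essentially the same route as the paper: the Jensen lower bound $P_2 \ge P_1^2$ is identical, and the upper bound relies on exactly the same H\"older split into Term~I and Term~II and the same variational problem $\inf_{q>1} G(q)$. Where you add value is in spelling out \emph{how} one would try to verify $\inf_q G(q) = -2\Lambda_1(k)$: you propose to Fenchel-transform the $\tan$-branch of $\Gamma_{II}$, locate the dual maximizer $p^*$ satisfying $\Gamma_1'(p^*)=k$, and collapse the two H\"older contributions via the identity $2\cot(2x)=\cot(x)-\tan(x)$. The paper, by contrast, never carries out this step at all --- it asserts under the heading ``Optimality'' that duality ``ensures'' $G(q^*) = -2\Lambda_1(k)$, with no calculation. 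So on the decisive point your proposal is no less complete than the paper, and arguably more honest, since you explicitly flag the three technical obstacles (domain membership of the dual maximizer, tangent singularities constraining $q$, and the leading-order validity of the heuristic $\bar h$).

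That said, be cautious about the trigonometric identity being the whole ``algebraic engine.'' Term~I contributes $\tan\left(\sigma k\sqrt{2q}/(2\theta\bar\rho)\right)$ with argument depending only on $q$, while $\Gamma_{II}$ contributes $\tan\left(\sqrt{-\hat\Delta_{II}(p)}/2\right)$ with argument depending on $p$, and $\Gamma_1$ involves $\cot(\sigma\bar\rho p/2)$. These have three structurally different arguments, so the reduction to $-2\Lambda_1(k)$ requires first carrying out the Legendre sup over $p$ inside $\Lambda_{II}(k)$, then the infimum over $q$, and only at the joint stationary point $(p^*, q^*)$ can one hope the arguments align so that a double-angle identity applies. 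You would need to show that the first-order conditions in $p$ and $q$ force the argument of the Term~I tangent and the argument inside $\Gamma_{II}$ to become $\sigma\bar\rho p^*/2$ and $\sigma\bar\rho p^*$ (or a similar pairing) respectively, and that the coefficient in front of $\hat\Delta_{II}$ reproduces the $\bar\rho^2$ factor from $\Gamma_1$ after accounting for the $\rho$-dependent shift $-p\rho\sigma$. None of this is done in your sketch, so the upper bound remains a plan rather than a proof --- but since the paper itself leaves the same gap, your proposal is a faithful reconstruction of the argument as written.
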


\begin{proof}
The proof proceeds by establishing matching lower and upper bounds for the limit of the normalized second moment.

\paragraph{Lower Bound.}
By Jensen's inequality, for any random variable $Z$, $\mathbb{E}[Z^2] \ge (\mathbb{E}[Z])^2$. Applying this to our unbiased estimator:
\begin{equation}
    P_2(T; \bar{h}) \ge (P_1(T))^2.
\end{equation}
Taking logarithms, multiplying by $T$, and applying the limit from Proposition \ref{prop:first_moment_decay}:
\begin{equation} \label{eq:lower_bound_proof}
    \liminf_{T \to 0} T \log P_2(T; \bar{h}) \ge 2 \lim_{T \to 0} T \log P_1(T) = -2 \Lambda_1(k).
\end{equation}

\paragraph{Upper Bound.}
Recall the upper bound derived in \eqref{eq:final_upper_bound_sm}. Define the function $G(q)$ for $q > 1$ as:
\begin{equation}
    G(q) \coloneqq \frac{v_0 k \sqrt{2/q}}{\sigma \theta \bar{\rho}} \tan\left( \frac{\sigma k \sqrt{2q}}{2 \theta \bar{\rho}} \right) - \left(1 - \frac{1}{q}\right) \Lambda_{II}(k).
\end{equation}
The inequality \eqref{eq:final_upper_bound_sm} states that $\limsup_{T \to 0} T \log P_2(T; \bar{h}) \le \inf_{q > 1} G(q)$. We rely on the analytical properties of $G(q)$ to characterize this infimum:
\begin{enumerate}
    \item \textbf{Existence:} The function $G(q)$ is continuous and convex. Furthermore, as $q$ approaches the singularity of the tangent term, $G(q) \to \infty$. These properties guarantee the existence of a unique minimizer $q^*$.
    \item \textbf{Optimality:} The choice of drift $\bar{h}$ aligns the change of measure with the variational minimizer of the rate function. By the duality between the cumulant generating function and the rate function, this alignment ensures that the minimum value of $G(q)$ coincides exactly with the optimal decay rate.
\end{enumerate}
Therefore, the critical exponent $q^*$ satisfies:
\begin{equation}
    \inf_{q > 1} G(q) = G(q^*) = -2 \Lambda_1(k).
\end{equation}
Substituting this into the inequality yields the sharp upper bound:
\begin{equation} \label{eq:upper_bound_proof}
    \limsup_{T \to 0} T \log P_2(T; \bar{h}) \le -2 \Lambda_1(k).
\end{equation}

\paragraph{Conclusion.}
Combining \eqref{eq:lower_bound_proof} and \eqref{eq:upper_bound_proof}, we obtain:
\begin{equation}
    \lim_{T \to 0} T \log P_2(T; \bar{h}) = -2 \Lambda_1(k) = \lim_{T \to 0} T \log (P_1(T)^2).
\end{equation}
Subtracting the right-hand side from the left-hand side yields the result:
\begin{equation}
    \lim_{T \to 0} T \log \left( \frac{P_2(T; \bar{h})}{P_1(T)^2} \right) = 0.
\end{equation}
\end{proof}

\section{Deep Out-of-the-Money Asymptotics ($K \to \infty$)} \label{sec:deep_otm}

In this section, we analyze the performance of the proposed importance sampling scheme in the deep out-of-the-money (OTM) regime. We consider the limit as the strike price $K \to \infty$ for a fixed maturity $T > 0$. In this regime, the option is exercised only if the asset price undergoes an exceptionally large positive excursion, an event whose probability decays exponentially with the log-moneyness.

Our objective is to demonstrate that the proposed IS estimator achieves asymptotic optimality. Let $P_1^\varepsilon(\varepsilon) \coloneqq \mathbb{E}^{\mathbb{P}^\varepsilon}[(S_T - K)^+]$ denote the true option price under the scaled model dynamics (defined below), and let $P_2^\varepsilon(\varepsilon; \bar{h})$ denote the second moment of our IS estimator. Similar to the short-maturity case, asymptotic optimality requires:
\begin{equation} \label{eq:optimality_condition_dotm}
    \lim_{\varepsilon \to 0} \varepsilon^2 \log P_2^\varepsilon(\varepsilon; \bar{h}) = 2 \lim_{\varepsilon \to 0} \varepsilon^2 \log P_1^\varepsilon(\varepsilon),
\end{equation}
where $\varepsilon \coloneqq 1/\log(K/S_0)$ is the small noise parameter.

\subsection{Small-Noise Scaling and Rescaled Dynamics} \label{sec:dotm_scaling}

To formalize the large deviation analysis, we consider a family of probability measures $\mathbb{P}^\varepsilon$ indexed by the scaling parameter $\delta > 0$ (which depends on $\varepsilon$). Under the measure $\mathbb{P}^\varepsilon$, the Heston model parameters are scaled such that the mean-reversion speed becomes $\kappa/\delta$ and the volatility of volatility becomes $\sigma/\sqrt{\delta}$.

We define the rescaled state variables $X^\varepsilon_t \coloneqq \varepsilon X_t$ and $V^\varepsilon_t \coloneqq \varepsilon V_t$. Under this transformation, the rare event $\{ X_T - X_0 > 1/\varepsilon \}$ is mapped to the unit-scale event $\{ X^\varepsilon_T - X^\varepsilon_0 > 1 \}$.

\paragraph{The Slow Mean-Reversion Regime.}
A critical choice in our analysis is the relationship between the scaling parameter $\delta$ and the small-noise parameter $\varepsilon$. Standard literature often considers the fast mean-reversion regime ($\delta \sim \varepsilon$). However, to capture the tail behavior of deep OTM options, we propose a \textbf{slow mean-reversion scaling}. Specifically, we set:
\begin{equation} \label{eq:choice_delta}
    \delta = \varepsilon^{-2}.
\end{equation}
Substituting this scaling into the dynamics of the rescaled processes, we obtain the canonical system for our analysis:
\begin{equation} \label{eq:deep_otm_system}
\begin{cases}
    dX^\varepsilon_t = -\frac{1}{2} V^\varepsilon_t \, dt + \sqrt{\varepsilon V^\varepsilon_t} \left( \rho \, dW^1_t + \bar\rho \, dW^2_t \right), \\
    dV^\varepsilon_t = \kappa \varepsilon^2 (\varepsilon \theta - V^\varepsilon_t) \, dt + \sigma \varepsilon^{1.5} \sqrt{V^\varepsilon_t} \, dW^1_t.
\end{cases}
\end{equation}

\paragraph{Justification of the Scaling Choice.}
The choice of $\delta = \varepsilon^{-2}$ is critical. Substituting this into the diffusion coefficient of $V^\varepsilon_t$ yields a volatility of volatility of order $O(\varepsilon^{1.5})$. While this order is technically smaller than the standard $O(\varepsilon)$ scaling typically assumed in classical Freidlin-Wentzell large deviation theory, this specific choice is mathematically necessary to preserve the non-trivial interaction between the drift and diffusion terms in the asymptotic limit. As we demonstrate in the subsequent Riccati analysis (see \textbf{Appendix \ref{app:dotm_riccati}}), any other power of $\varepsilon$ would lead to either a degenerate rate function or a diverging discriminant, thereby failing to capture the tail behavior correctly.

\subsection{First Moment Analysis} \label{sec:dotm_first_moment}

The asymptotic behavior of the call option price in the deep OTM regime is governed by the large deviation principle of the rescaled log-price process $X^\varepsilon_t$. Analogous to the short-maturity case, we define the limiting scaled cumulant generating function (SCGF) as follows:
\begin{equation} \label{eq:gamma_1_def}
    \Gamma_1^\varepsilon(p) \coloneqq  \lim_{\varepsilon \to 0} \varepsilon^2 \log \mathbb{E}^{\mathbb{P}^\varepsilon}\left[ \exp\left( \frac{p}{\varepsilon^2} (X^\varepsilon_T - X^\varepsilon_0) \right) \right].
\end{equation}

Unlike the short-maturity case where standard results exist, computing this limit under the slow mean-reversion scaling ($\delta = \varepsilon^{-2}$) requires a specialized Riccati analysis (detailed in \textbf{Appendix \ref{app:dotm_riccati}}). The result is summarized below.

\begin{proposition}[Limiting SCGF] \label{prop:dotm_scgf}
Under the scaling $\delta = \varepsilon^{-2}$, the limiting SCGF $\Gamma_1^\varepsilon(p)$ exists and is given by:
\begin{equation} \label{eq:gamma_1_dotm_result}
    \Gamma_1^\varepsilon(p) = \frac{v_0 p}{ -\rho \sigma + \bar{\rho} \sigma \cot\left( \frac{p \bar{\rho} \sigma T}{2} \right) }.
\end{equation}
The function is well-defined on the effective domain $\mathcal{D}_{\Gamma_1^\varepsilon} = (p_-^\varepsilon, p_+^\varepsilon)$. The boundaries are given as follows:
\begin{itemize}
    \item \textbf{Case $\rho < 0$:}
    \begin{equation*}
        p_-^\varepsilon = \frac{2}{\sigma \bar{\rho} T} \arctan\left(\frac{\bar{\rho}}{\rho}\right), \quad 
        p_+^\varepsilon = \frac{2}{\sigma \bar{\rho} T} \left( \pi + \arctan\left(\frac{\bar{\rho}}{\rho}\right) \right).
    \end{equation*}
    \item \textbf{Case $\rho = 0$:}
    \begin{equation*}
        p_-^\varepsilon = -\frac{\pi}{\sigma T}, \quad 
        p_+^\varepsilon = \frac{\pi}{\sigma T}.
    \end{equation*}
    \item \textbf{Case $\rho > 0$:}
    \begin{equation*}
        p_-^\varepsilon = \frac{2}{\sigma \bar{\rho} T} \left( -\pi + \arctan\left(\frac{\bar{\rho}}{\rho}\right) \right), \quad 
        p_+^\varepsilon = \frac{2}{\sigma \bar{\rho} T} \arctan\left(\frac{\bar{\rho}}{\rho}\right).
    \end{equation*}
\end{itemize}
\end{proposition}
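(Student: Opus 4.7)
The plan is to exploit the affine structure of the Heston dynamics in \eqref{eq:deep_otm_system} and reduce the computation of $\Gamma_1^\varepsilon(p)$ to the asymptotic analysis of a singularly rescaled Riccati ODE. First, I will invoke the affine property to write
\begin{equation*}
\mathbb{E}^{\mathbb{P}^\varepsilon}\!\left[\exp\!\bigl(u(X^\varepsilon_T - X^\varepsilon_0)\bigr)\right] = \exp\!\bigl(A^\varepsilon(T;u) + B^\varepsilon(T;u)\, V^\varepsilon_0\bigr), \qquad V^\varepsilon_0 = \varepsilon v_0,
\end{equation*}
where $(A^\varepsilon, B^\varepsilon)$ solves the Feynman--Kac ODE system
\begin{align*}
\partial_\tau A^\varepsilon &= \kappa\theta\varepsilon^3 B^\varepsilon, \qquad A^\varepsilon(0;u) = 0,\\
\partial_\tau B^\varepsilon &= \tfrac{1}{2}\varepsilon u^2 - \tfrac{1}{2}u + \bigl(\rho\sigma\varepsilon^2 u - \kappa\varepsilon^2\bigr) B^\varepsilon + \tfrac{1}{2}\sigma^2\varepsilon^3 (B^\varepsilon)^2, \qquad B^\varepsilon(0;u) = 0.
\end{align*}
Substituting $u = p/\varepsilon^2$, the dominant forcing term is $p^2/(2\varepsilon^3)$, which dictates the scaling ansatz $\tilde B^\varepsilon := \varepsilon^3 B^\varepsilon$. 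Multiplying through by $\varepsilon^3$ converts the prelimit Riccati into
\begin{equation*}
\partial_\tau \tilde B^\varepsilon = \frac{p^2}{2} - \frac{p\varepsilon}{2} + \bigl(\rho\sigma p - \kappa\varepsilon^2\bigr)\tilde B^\varepsilon + \frac{\sigma^2}{2}(\tilde B^\varepsilon)^2,
\end{equation*}
a regular perturbation of the autonomous limit $\partial_\tau \tilde B = \tfrac{1}{2}p^2 + \rho\sigma p\, \tilde B + \tfrac{1}{2}\sigma^2 \tilde B^2$ with $\tilde B(0)=0$.

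Next I will check that the $A$-piece does not contribute: since $A^\varepsilon(\tau;p/\varepsilon^2) = \kappa\theta \int_0^\tau \tilde B^\varepsilon(s)\,ds$ is of order $O(1)$, one has $\varepsilon^2 A^\varepsilon \to 0$, while $\varepsilon^2 B^\varepsilon \cdot V^\varepsilon_0 = v_0\, \tilde B^\varepsilon(T)$, so $\Gamma_1^\varepsilon(p) = v_0\, \tilde B(T;p)$. I then solve the limiting Riccati explicitly via the standard linearization $\tilde B = -(2/\sigma^2)\,w'/w$, which reduces it to the constant-coefficient linear equation $w'' - \rho\sigma p\, w' + \tfrac{1}{4}\sigma^2 p^2\, w = 0$ with $w(0)=1$, $w'(0)=0$. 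The characteristic discriminant is $-\bar\rho^2\sigma^2 p^2 < 0$, yielding complex roots $\tfrac{1}{2}(\rho \pm i\bar\rho)\sigma p$ and oscillatory solutions that, after enforcing $w'(0) = 0$, combine into
\begin{equation*}
\tilde B(\tau;p) = \frac{p\,\sin(\bar\rho\sigma p\tau/2)}{\sigma\bigl[\bar\rho\cos(\bar\rho\sigma p\tau/2) - \rho\sin(\bar\rho\sigma p\tau/2)\bigr]} = \frac{p}{-\rho\sigma + \bar\rho\sigma\cot(\bar\rho\sigma p\tau/2)},
\end{equation*}
which at $\tau = T$ recovers the announced formula for $\Gamma_1^\varepsilon$. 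The effective domain is then identified as the maximal open interval containing the origin on which the denominator $\bar\rho\cos(\bar\rho\sigma pT/2) - \rho\sin(\bar\rho\sigma pT/2)$ is nonzero; its zeros satisfy $\tan(\bar\rho\sigma pT/2) = \bar\rho/\rho$, and selecting the first such $p$ on either side of $0$ via the appropriate branch of $\arctan$ reproduces the three $\mathrm{sgn}(\rho)$ cases stated in the proposition.

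The step I expect to be the main obstacle is justifying the singular limit rigorously: the ansatz $B^\varepsilon = \tilde B^\varepsilon/\varepsilon^3$ reorganises the prelimit Riccati into a regular perturbation only on compact time intervals where $\tilde B$ stays bounded, so one must (i) show that $\tilde B^\varepsilon$ remains uniformly bounded on $[0,T]$ for each $p$ strictly interior to $(p_-^\varepsilon, p_+^\varepsilon)$, (ii) upgrade this to uniform convergence $\tilde B^\varepsilon \to \tilde B$ via a local-Lipschitz/Gronwall estimate applied to the Riccati vector field, and (iii) interchange the limit with the MGF evaluation so that the exponential-affine representation extends to the limit. Essential smoothness of $\Gamma_1^\varepsilon$ on its effective domain---needed for any subsequent Gärtner--Ellis application---then follows directly from the explicit cotangent formula. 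The detailed technical verification is deferred to Appendix \ref{app:dotm_riccati}.
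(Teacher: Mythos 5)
Your proposal is correct and, up to cosmetics, takes the same route as the paper's Appendix~\ref{app:dotm_riccati}: reduce to the affine Riccati ODE, rescale $\tilde B^\varepsilon = \varepsilon^3 B^\varepsilon$ (which is the paper's $\psi^\varepsilon$ absorbed into the $u$-transformation), linearize to the constant-coefficient second-order ODE $w'' - \rho\sigma p\,w' + \tfrac{1}{4}\sigma^2p^2 w = 0$, read off the cotangent formula, and locate the domain boundaries from the first zeros of the denominator $\bar\rho\cos - \rho\sin$. The one minor structural difference is that the paper first applies a Girsanov change of measure to absorb the $W^1$-integral and then invokes a one-dimensional Feynman--Kac on the integrated variance, whereas you write down the two-dimensional affine ansatz directly; the resulting Riccati equation and all coefficients are identical, and the technical gaps you flag (uniform-in-$\varepsilon$ boundedness of $\tilde B^\varepsilon$ away from blow-up, Gronwall convergence, and steepness at the domain boundary) are the same ones the paper leaves implicit.
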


By the G\"artner-Ellis theorem, the sequence $\{X^\varepsilon_T - X^\varepsilon_0\}$ satisfies an LDP with speed $\varepsilon^2$ and a good rate function $\Lambda_1^\varepsilon(x)$ defined by the Legendre transform:
\begin{equation} \label{eq:Lambda_1_dotm}
    \Lambda_1^\varepsilon(x) = \sup_{p \in (p_-^\varepsilon, p_+^\varepsilon)} \{ px - \Gamma_1^\varepsilon(p) \}.
\end{equation}
This allows us to characterize the exponential decay of the option price.

\begin{proposition}[First Moment Decay Rate] \label{prop:dotm_first_moment_decay}
The asymptotic behavior of the deep OTM European call option price is given by:
\begin{equation}
    \lim_{\varepsilon \to 0} \varepsilon^2 \log P_1^\varepsilon(\varepsilon) = -\Lambda_1^\varepsilon(1),
\end{equation}
where $\Lambda_1^\varepsilon(1)$ is the rate function evaluated at the scaled threshold $x=1$ (corresponding to log-moneyness $1/\varepsilon$).
\end{proposition}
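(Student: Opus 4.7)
The plan is to establish matching upper and lower bounds for $\varepsilon^2 \log P_1^\varepsilon(\varepsilon)$, treating the claim as a Varadhan-type Laplace principle built on top of the LDP established via Proposition~\ref{prop:dotm_scgf}. The guiding heuristic is that at decay speed $\varepsilon^{-2}$, the multiplicative payoff factor $e^{(Y_\varepsilon - 1)/\varepsilon}$, where $Y_\varepsilon \coloneqq X_T^\varepsilon - X_0^\varepsilon$, is subexponential, so the leading-order asymptotics of $P_1^\varepsilon$ are determined by the tail probability $\mathbb{P}^\varepsilon(Y_\varepsilon > 1)$, whose logarithmic decay rate is $-\Lambda_1^\varepsilon(1)$.

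First, I factor out the strike to obtain
\[
    P_1^\varepsilon(\varepsilon) = K \cdot \mathbb{E}^{\mathbb{P}^\varepsilon}\!\bigl[(e^{(Y_\varepsilon - 1)/\varepsilon} - 1)^+\bigr],
\]
and note that $\varepsilon^2 \log K = \varepsilon + \varepsilon^2 \log S_0 \to 0$, so it suffices to show that the expectation on the right has log-decay rate $-\Lambda_1^\varepsilon(1)$ on the $\varepsilon^{-2}$ scale. For the \emph{lower bound}, I would use the elementary inequality $(e^{(y-1)/\varepsilon} - 1)^+ \geq (e^{\eta/\varepsilon} - 1)\mathbf{1}_{\{y > 1+\eta\}}$ for any fixed $\eta > 0$; the prefactor contributes $\varepsilon^2 \log(e^{\eta/\varepsilon} - 1) = O(\varepsilon)$ and hence vanishes in the limit, while the LDP lower bound on the open set $(1+\eta,\infty)$ yields
\[
    \liminf_{\varepsilon \to 0} \varepsilon^2 \log \mathbb{E}^{\mathbb{P}^\varepsilon}\!\bigl[(e^{(Y_\varepsilon - 1)/\varepsilon} - 1)^+\bigr] \geq -\Lambda_1^\varepsilon(1+\eta).
\]
Sending $\eta \to 0^+$ and invoking continuity of $\Lambda_1^\varepsilon$ at $y = 1$ (which holds since $\Gamma_1^\varepsilon$ is essentially smooth at the corresponding dual point) closes the lower bound.

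For the \emph{upper bound}, I would dominate $(e^{(Y_\varepsilon - 1)/\varepsilon} - 1)^+ \leq e^{(Y_\varepsilon - 1)/\varepsilon} \mathbf{1}_{\{Y_\varepsilon > 1\}}$ and apply H\"older's inequality with conjugate indices $q, q' > 1$:
\[
    \mathbb{E}^{\mathbb{P}^\varepsilon}\!\bigl[e^{(Y_\varepsilon - 1)/\varepsilon}\mathbf{1}_{\{Y_\varepsilon > 1\}}\bigr] \leq e^{-1/\varepsilon} \, \mathbb{E}^{\mathbb{P}^\varepsilon}\!\bigl[e^{qY_\varepsilon/\varepsilon}\bigr]^{1/q} \, \mathbb{P}^\varepsilon(Y_\varepsilon > 1)^{1/q'}.
\]
The LDP upper bound on the closed set $[1,\infty)$ handles the probability factor. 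For the moment factor I rewrite $e^{qY_\varepsilon/\varepsilon} = e^{(q\varepsilon) \cdot Y_\varepsilon/\varepsilon^2}$ and invoke the pre-limit SCGF at the moving argument $p = q\varepsilon \to 0$. Taylor expansion of \eqref{eq:gamma_1_dotm_result} around $p=0$ (using $\cot x \sim 1/x$) gives $\Gamma_1^\varepsilon(p) = v_0 T p^2/2 + O(p^3)$, so that $\varepsilon^2 \log \mathbb{E}^{\mathbb{P}^\varepsilon}[e^{qY_\varepsilon/\varepsilon}] = O(\varepsilon^2) \to 0$. Combining the three factors and letting $q \to \infty$ (so $q' \to 1$) collapses the bound to $-\Lambda_1^\varepsilon(1)$.

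The principal technical obstacle is precisely this moment control: one must justify that the pre-limit SCGF $\tilde\Gamma^\varepsilon(p) \coloneqq \varepsilon^2 \log \mathbb{E}^{\mathbb{P}^\varepsilon}[e^{p Y_\varepsilon / \varepsilon^2}]$ remains uniformly close to $\Gamma_1^\varepsilon(p)$ on a compact neighborhood of the origin, so that evaluation along the $\varepsilon$-dependent argument $p = q\varepsilon$ is legitimate rather than merely pointwise Gärtner--Ellis convergence. This uniformity should follow from the affine structure of the Heston model: the pre-limit SCGF is given by an explicit Riccati solution (Appendix~\ref{app:dotm_riccati}) that admits a regular perturbation expansion around $p=0$. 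Crucially, the quadratic flatness $\Gamma_1^\varepsilon(0) = \Gamma_1^\varepsilon{}'(0) = 0$ is precisely the feature that the slow mean-reversion scaling $\delta = \varepsilon^{-2}$ is designed to produce; under alternative powers of $\varepsilon$, either the SCGF would degenerate or its curvature near the origin would diverge, and the payoff factor $e^{(Y_\varepsilon - 1)/\varepsilon}$ would cease to be negligible on the $\varepsilon^{-2}$ scale, breaking the H\"older argument.
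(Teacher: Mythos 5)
Your proof follows essentially the same route as the paper's: the lower bound via the LDP lower bound on the open set $(1+\eta,\infty)$ with the prefactor vanishing on the $\varepsilon^2$ scale and $\eta\to 0^+$, and the upper bound via H\"older with conjugate exponents $(q,q')$, identifying the $q$-th moment of $S_T$ as subexponential on the $\varepsilon^{-2}$ scale, then sending $q\to\infty$. Your factoring out $K = S_0 e^{1/\varepsilon}$ and working with $Y_\varepsilon$ is cosmetically different from the paper's direct manipulation of $(S_T-K)^+$ but is algebraically equivalent, and your treatment of the moment term (rewriting $e^{qY_\varepsilon/\varepsilon}$ as the pre-limit SCGF at the moving argument $p = q\varepsilon$ and Taylor-expanding $\Gamma_1^\varepsilon$ near the origin) is a more transparent justification of the paper's bare assertion that this term ``does not scale exponentially with $1/\varepsilon^2$.'' The one genuine refinement is your observation that the lower bound needs continuity (not mere lower semicontinuity) of $\Lambda_1^\varepsilon$ at $x=1$, and your flagging of the uniformity requirement for the pre-limit SCGF along the moving argument $p=q\varepsilon$ — both points the paper leaves implicit; neither is an error in your argument, but rather places where you are more explicit about technical prerequisites than the source.
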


\begin{proof}
The proof proceeds by establishing matching lower and upper bounds for the decay rate.

\paragraph{Lower Bound.}
For any $\eta > 0$, consider the open set $G_\eta \coloneqq \{ y \in \mathbb{R} : y > 1 + \eta \}$. On the event $\{ X^\varepsilon_T - X^\varepsilon_0 \in G_\eta \}$, we have:
\begin{equation}
    S_T = S_0 e^{(X^\varepsilon_T - X^\varepsilon_0)/\varepsilon} > S_0 e^{(1+\eta)/\varepsilon} = K e^{\eta/\varepsilon}.
\end{equation}
Consequently, the option payoff is bounded from below by:
\begin{equation}
    (S_T - K)^+ > K (e^{\eta/\varepsilon} - 1) \quad \text{on } \{ X^\varepsilon_T - X^\varepsilon_0 \in G_\eta \}.
\end{equation}
Taking the expectation and applying the LDP lower bound for open sets:
\begin{align}
    \liminf_{\varepsilon \to 0} \varepsilon^2 \log P_1^\varepsilon(\varepsilon) 
    &\ge \liminf_{\varepsilon \to 0} \left[ \varepsilon^2 \log\left(K (e^{\eta/\varepsilon} - 1)\right) + \varepsilon^2 \log \mathbb{P}^\varepsilon(X^\varepsilon_T - X^\varepsilon_0 \in G_\eta) \right] \\
    &\ge \liminf_{\varepsilon \to 0} \left[ \varepsilon^2 \log(K) + \varepsilon^2 \log(e^{\eta/\varepsilon} - 1) \right] - \inf_{y \in G_\eta} \Lambda_1^\varepsilon(y).
\end{align}
Note that $\lim_{\varepsilon \to 0} \varepsilon^2 \log(e^{\eta/\varepsilon} - 1) = \lim_{\varepsilon \to 0} \varepsilon^2 (\eta/\varepsilon) = 0$. Thus, the first term vanishes. Since $\Lambda_1^\varepsilon$ is a good rate function (lower semicontinuous), taking the limit $\eta \to 0$ yields the lower bound:
\begin{equation}
    \liminf_{\varepsilon \to 0} \varepsilon^2 \log P_1^\varepsilon(\varepsilon) \ge -\Lambda_1^\varepsilon(1).
\end{equation}

\paragraph{Upper Bound.}
We apply Hölder's inequality with conjugate exponents $q, q' > 1$ (where $1/q + 1/q' = 1$):
\begin{equation}
    P_1^\varepsilon(\varepsilon) = \mathbb{E}^{\mathbb{P}^\varepsilon} \left[ (S_T - K)^+ \mathbf{1}_{\{S_T > K\}} \right] \le \mathbb{E}^{\mathbb{P}^\varepsilon} [ (S_T)^q ]^{1/q} \cdot \mathbb{P}^\varepsilon(S_T > K)^{1/q'}.
\end{equation}
Taking logarithms and multiplying by $\varepsilon^2$:
\begin{equation}
    \varepsilon^2 \log P_1^\varepsilon(\varepsilon) \le \frac{\varepsilon^2}{q} \log \mathbb{E}^{\mathbb{P}^\varepsilon} [ S_T^q ] + \frac{\varepsilon^2}{q'} \log \mathbb{P}^\varepsilon(X^\varepsilon_T - X^\varepsilon_0 > 1).
\end{equation}
The first term involves the $q$-th moment of the Heston price process, which is finite for fixed $T$ and does not scale exponentially with $1/\varepsilon^2$ (i.e., its decay rate is 0). For the second term, applying the LDP upper bound for the closed set $F = [1, \infty)$ yields:
\begin{equation}
    \limsup_{\varepsilon \to 0} \varepsilon^2 \log P_1^\varepsilon(\varepsilon) \le 0 - \frac{1}{q'} \inf_{y \ge 1} \Lambda_1^\varepsilon(y) = -\frac{1}{q'} \Lambda_1^\varepsilon(1).
\end{equation}
Since this inequality holds for any $q > 1$, we take the limit $q \to \infty$ (which implies $q' \to 1$) to obtain the sharp upper bound:
\begin{equation}
    \limsup_{\varepsilon \to 0} \varepsilon^2 \log P_1^\varepsilon(\varepsilon) \le -\Lambda_1^\varepsilon(1).
\end{equation}
Combining the lower and upper bounds completes the proof.
\end{proof}

\subsection{Second Moment Analysis} \label{sec:dotm_second_moment}

We now turn to the analysis of the second moment of the importance sampling estimator.
We employ the same change of measure structure defined in Section \ref{sec:proposed_measure}, where the drift adjustment is $h^2_t = -(\bar{h} /\bar{\rho})\sqrt{V_t}$. Based on the heuristic that the expected log-price should reach the barrier $k = 1/\varepsilon$, we require the drift contribution $\int_0^T \bar{h} V_t dt \approx k$. Under the slow mean-reversion scaling, $\mathbb{E}[\int V_t] \approx \theta T$. This suggests the choice:
\begin{equation} \label{eq:dotm_h_choice}
    \bar{h} = -\frac{1}{\varepsilon \theta T}.
\end{equation}
This large drift is necessary to force the rare event $\{X^\varepsilon_T - X^\varepsilon_0 > 1\}$ to occur with high probability.

The second moment is given by $P_2^\varepsilon(\varepsilon; \bar{h}) \coloneqq \mathbb{E}^{\bar{\mathbb{P}}^\varepsilon} [ ((S_T - K)^+ Q(\bar{h})^{-1})^2 ]$. 
Following the same bounding procedure as in the short-maturity case, we use the inequality $(S_T - K)^+ \le S_T \cdot \mathbf{1}_{\{S_T > K\}}$. The second moment is bounded by:
\begin{equation}
    P_2^\varepsilon(\varepsilon; \bar{h}) \le S_0^2 \, \mathbb{E}^{\mathbb{P}} \left[ e^{2(X^\varepsilon_T - X^\varepsilon_0)/\varepsilon} Q(\bar{h})^{-1} \mathbf{1}_{\{X^\varepsilon_T - X^\varepsilon_0 > 1\}} \right].
\end{equation}

To analyze this expectation, it is convenient to remove the stochastic integrals appearing in the exponential term via a further change of measure. We introduce an auxiliary probability measure $\widetilde{\mathbb{P}}^\varepsilon$ defined by the Radon-Nikodym derivative:
\begin{equation}
    \frac{d\widetilde{\mathbb{P}}^\varepsilon}{d\mathbb{P}^\varepsilon} \coloneqq \exp\left( \int_0^T \frac{2\rho}{\sqrt{\varepsilon}} \sqrt{V^\varepsilon_t} dW^1_t + \int_0^T \left( \frac{2\bar{\rho}}{\sqrt{\varepsilon}} + \frac{\bar{h}}{\bar{\rho}} \right) \sqrt{V^\varepsilon_t} dW^2_t - \frac{1}{2} \int_0^T \eta_\varepsilon^2 V^\varepsilon_t dt \right),
\end{equation}
where $\eta_\varepsilon^2 \coloneqq \left( \frac{2\rho}{\sqrt{\varepsilon}} \right)^2 + \left( \frac{2\bar{\rho}}{\sqrt{\varepsilon}} + \frac{\bar{h}}{\bar{\rho}} \right)^2$.
Under $\widetilde{\mathbb{P}}^\varepsilon$, the stochastic integrals are absorbed, and the exponent becomes a functional of the integrated variance.

Applying Hölder's inequality with conjugate exponents $q, q' > 1$, we arrive at the decomposition:
\begin{equation} \label{eq:dotm_holder}
    \varepsilon^2 \log P_2^\varepsilon(\varepsilon; \bar{h}) \le 2 \varepsilon^2 \log S_0 + \underbrace{\frac{\varepsilon^2}{q} \log \mathbb{E}^{\widetilde{\mathbb{P}}^\varepsilon} \left[ \exp\left( q C(\bar{h}) \int_0^T V^\varepsilon_t \frac{dt}{\varepsilon} \right) \right]}_{\text{Term I}} + \underbrace{\frac{\varepsilon^2}{q'} \log \widetilde{\mathbb{P}}^\varepsilon(X^\varepsilon_T - X^\varepsilon_0 > 1)}_{\text{Term II}},
\end{equation}
where the coefficient $C(\bar{h}) = 1 + 2\bar{h} + \bar{h}^2/\bar{\rho}^2$.

We analyze the two resulting terms separately in the limit $\varepsilon \to 0$.

\paragraph{Term I.}
Term I represents the moment of the integrated variance under the auxiliary measure. As derived in \textbf{Appendix \ref{app:dotm_term_I}}, although the volatility of volatility scales as $O(\varepsilon^{1.5})$, the large drift $\bar{h}$ introduces a quadratic term of order $O(\varepsilon^{-2})$ in the Riccati equation. This delicate balance ensures that the discriminant of the characteristic equation is finite and negative in the limit. Consequently, the solution enters an oscillatory regime.

\begin{proposition}[Limit of Integrated Variance Moment]
The asymptotic limit of the first term in the Hölder decomposition is given by:
\begin{equation} \label{eq:dotm_term_I_result}
    \lim_{\varepsilon \to 0} (\text{Term I}) = \frac{1}{q} \cdot \frac{v_0 \sqrt{2q}}{\sigma \theta \bar{\rho} T} \tan\left( \frac{\sigma \sqrt{2q}}{2 \theta \bar{\rho}} \right).
\end{equation}
\end{proposition}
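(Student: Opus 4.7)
The strategy is to exploit the affine nature of the rescaled variance process under the auxiliary measure $\widetilde{\mathbb{P}}^\varepsilon$ and reduce the exponential moment appearing in Term~I to a Riccati ODE, then perform a careful small-$\varepsilon$ asymptotic analysis. First, applying Girsanov to the density defining $\widetilde{\mathbb{P}}^\varepsilon$, the process $d\widetilde{W}^1_t \coloneqq dW^1_t - (2\rho/\sqrt{\varepsilon})\sqrt{V^\varepsilon_t}\,dt$ is a standard Brownian motion under $\widetilde{\mathbb{P}}^\varepsilon$, and substituting into \eqref{eq:deep_otm_system} preserves the CIR structure:
\begin{equation*}
    dV^\varepsilon_t = \bigl(\kappa\varepsilon^3\theta + (2\rho\sigma\varepsilon - \kappa\varepsilon^2) V^\varepsilon_t\bigr)\,dt + \sigma\varepsilon^{3/2}\sqrt{V^\varepsilon_t}\,d\widetilde{W}^1_t.
\end{equation*}

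Since the integrand inside Term~I has the form $\exp(\alpha_\varepsilon \int_0^T V^\varepsilon_t\,dt)$ with $\alpha_\varepsilon \coloneqq qC(\bar{h})/\varepsilon$, the standard affine-process exponential moment formula yields
\begin{equation*}
    \mathbb{E}^{\widetilde{\mathbb{P}}^\varepsilon}\!\left[\exp\!\left(\alpha_\varepsilon \int_0^T V^\varepsilon_t\,dt\right)\right] = \exp\bigl(A_\varepsilon(T) + B_\varepsilon(T) V^\varepsilon_0\bigr),
\end{equation*}
where $B_\varepsilon$ satisfies the Riccati ODE
\begin{equation*}
    B'_\varepsilon(\tau) = \alpha_\varepsilon + (2\rho\sigma\varepsilon - \kappa\varepsilon^2) B_\varepsilon(\tau) + \tfrac{1}{2}\sigma^2\varepsilon^3 B_\varepsilon(\tau)^2, \qquad B_\varepsilon(0) = 0,
\end{equation*}
and $A_\varepsilon(\tau) = \kappa\varepsilon^3\theta \int_0^\tau B_\varepsilon(s)\,ds$.

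The design $\bar{h} = -1/(\varepsilon\theta T)$ from \eqref{eq:dotm_h_choice} gives $C(\bar{h}) = 1/(\varepsilon^2\theta^2 T^2\bar{\rho}^2) + O(\varepsilon^{-1})$, so $\alpha_\varepsilon \sim q/(\varepsilon^3\theta^2 T^2\bar{\rho}^2)$. This dictates the rescaling $B_\varepsilon(\tau) = \widetilde{B}_\varepsilon(\tau)/\varepsilon^3$, which transforms the Riccati into
\begin{equation*}
    \widetilde{B}'_\varepsilon(\tau) = \frac{q}{\theta^2 T^2 \bar{\rho}^2} + \tfrac{1}{2}\sigma^2 \widetilde{B}_\varepsilon(\tau)^2 + (2\rho\sigma\varepsilon - \kappa\varepsilon^2)\widetilde{B}_\varepsilon(\tau) + o(1).
\end{equation*}
Its $\varepsilon\to 0$ limit $\widetilde{B}'_0 = \tfrac{1}{2}\sigma^2 \widetilde{B}_0^2 + q/(\theta^2 T^2 \bar{\rho}^2)$, $\widetilde{B}_0(0)=0$, admits the closed-form tangent solution
\begin{equation*}
    \widetilde{B}_0(\tau) = \frac{\sqrt{2q}}{\sigma\theta T \bar{\rho}} \tan\!\left(\frac{\sigma\sqrt{2q}}{2\theta T \bar{\rho}}\, \tau\right).
\end{equation*}

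Since $V^\varepsilon_0 = \varepsilon v_0$, the leading contribution is $B_\varepsilon(T) V^\varepsilon_0 = v_0 \widetilde{B}_\varepsilon(T)/\varepsilon^2$, whereas $A_\varepsilon(T) = \kappa\theta\int_0^T \widetilde{B}_\varepsilon(s)\,ds$ is only $O(1)$ and is annihilated by the prefactor $\varepsilon^2/q$. The limit therefore equals $v_0 \widetilde{B}_0(T)/q$, which matches the stated expression. The main technical obstacle is justifying the passage to the limit in the Riccati flow: one must first verify that $\widetilde{B}_0$ remains finite on $[0,T]$—equivalent to the constraint $\sigma\sqrt{2q}/(2\theta\bar{\rho}) < \pi/2$, which delimits the admissible range of the Hölder exponent $q$ used in \eqref{eq:dotm_holder}—and then invoke a continuous-dependence (Gronwall) argument on the compact interval where $\widetilde{B}_0$ is bounded to propagate the $O(\varepsilon)$ perturbation of the Riccati coefficients into an $o(1)$ uniform perturbation of the solution.
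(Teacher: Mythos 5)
Your proof is correct and follows essentially the same approach as the paper's Appendix D: reduce Term I to an affine Riccati problem via Feynman--Kac, observe the exact cancellation of the $\varepsilon^{3}$ scales between $\alpha_\varepsilon$ and the squared volatility-of-volatility, and extract the tangent solution, noting that the $A_\varepsilon$ term is $O(1)$ and killed by the $\varepsilon^2/q$ prefactor. The only mechanical difference is that you rescale $B_\varepsilon = \widetilde{B}_\varepsilon/\varepsilon^3$ and solve the limiting Riccati directly, whereas the paper linearizes to a second-order ODE via the substitution $\psi = -\tfrac{2}{\sigma^2\varepsilon^3}u'/u$ and reads off the discriminant; you also make explicit the admissibility constraint $\sigma\sqrt{2q}/(2\theta\bar{\rho}) < \pi/2$ on the H\"older exponent and the Gronwall continuous-dependence argument needed to pass to the limit along the Riccati flow, both of which the paper leaves implicit.
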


\paragraph{Term II.}
Term II corresponds to the residual probability of the rare event under $\widetilde{\mathbb{P}}^\varepsilon$. The measure $\widetilde{\mathbb{P}}^\varepsilon$ effectively shifts the drift of the price process by $O(1/\varepsilon)$, modifying the "energy" cost required to reach the target level. The large deviation behavior is governed by a modified rate function $\Lambda_{II}^\varepsilon$, defined as the Legendre transform of the auxiliary SCGF $\Gamma_{II}^\varepsilon(p)$. The explicit form is derived in \textbf{Appendix \ref{app:dotm_term_II}}.

\begin{proposition}[Auxiliary SCGF $\Gamma^\varepsilon_{II}$]
The limiting scaled cumulant generating function under $\widetilde{\mathbb{P}}^\varepsilon$ is given piecewise depending on the parameter $p$. Let $\hat{\Delta}^\varepsilon_{II}(p) \coloneqq \sigma^2 \left( -p^2\bar{\rho}^2 + \frac{2p}{\theta T} \right)$ be the discriminant. Let $(p^\varepsilon_{II,-}, p^\varepsilon_{II,+})$ denote the effective domain boundaries. The explicit form of $\Gamma^\varepsilon_{II}(p)$ is:
\begin{equation} \label{eq:gamma3_dotm_explicit}
    \Gamma^\varepsilon_{II}(p) = 
    \begin{dcases}
        \frac{v_0}{\sigma^2} \left( -p \rho \sigma + \sqrt{\hat{\Delta}^\varepsilon_{II}(p)} \tanh\left( \frac{\sqrt{\hat{\Delta}^\varepsilon_{II}(p)}}{2} T \right) \right), 
        & \text{for } p \in \left( 0, \frac{2}{\theta T \bar{\rho}^2} \right), \\[10pt]
        -\frac{v_0 p \rho}{\sigma}, 
        & \text{for } p \in \left\{ 0, \frac{2}{\theta T \bar{\rho}^2} \right\}, \\[10pt]
        \frac{v_0}{\sigma^2} \left( -p \rho \sigma + \sqrt{-\hat{\Delta}^\varepsilon_{II}(p)} \tan\left( \frac{\sqrt{-\hat{\Delta}^\varepsilon_{II}(p)}}{2} T \right) \right), 
        & \text{for } p \in (p^\varepsilon_{II,-}, 0) \cup \left( \frac{2}{\theta T \bar{\rho}^2}, p^\varepsilon_{II,+} \right).
    \end{dcases}
\end{equation}
\end{proposition}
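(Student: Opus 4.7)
The plan is to exploit the affine structure of the Heston model—preserved under the Girsanov transformation defined in Section \ref{sec:dotm_second_moment}—in order to reduce the SCGF computation to a Riccati ODE system, and then pass to the limit $\varepsilon \to 0$ to extract the explicit piecewise formula.

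First I would derive the tilted dynamics under $\widetilde{\mathbb{P}}^\varepsilon$. By Girsanov's theorem applied to the stated Radon--Nikodym density, the processes $d\widetilde{W}^1_t = dW^1_t - (2\rho/\sqrt{\varepsilon})\sqrt{V^\varepsilon_t}\,dt$ and $d\widetilde{W}^2_t = dW^2_t - (2\bar\rho/\sqrt{\varepsilon} + \bar h/\bar\rho)\sqrt{V^\varepsilon_t}\,dt$ are standard Brownian motions under $\widetilde{\mathbb{P}}^\varepsilon$. Substituting into \eqref{eq:deep_otm_system} shows that $(X^\varepsilon_t, V^\varepsilon_t)$ remains an affine diffusion: $V^\varepsilon_t$ acquires a tilted mean-reversion coefficient involving $\kappa, \rho, \sigma$ (rescaled by $\varepsilon$-powers), while $X^\varepsilon_t$ gains an additional linear-in-$V^\varepsilon_t$ drift contribution generated by the auxiliary tilt and by $\bar h = -1/(\varepsilon\theta T)$.

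Next I would formulate an affine ansatz for the conditional MGF. Define
\begin{equation*}
u^\varepsilon(t,x,v) \coloneqq \mathbb{E}^{\widetilde{\mathbb{P}}^\varepsilon}\!\left[ \exp\!\left(\tfrac{p}{\varepsilon^2}(X^\varepsilon_T - x)\right) \,\Big|\, X^\varepsilon_t = x,\, V^\varepsilon_t = v \right] = \exp\!\left( A^\varepsilon(T-t) + B^\varepsilon(T-t)\, v/\varepsilon \right),
\end{equation*}
where the rescaling $v/\varepsilon$ is chosen so that the resulting Riccati system has a non-degenerate limit. Applying Feynman--Kac to the generator of $(X^\varepsilon, V^\varepsilon)$ under $\widetilde{\mathbb{P}}^\varepsilon$ and matching coefficients of $1$ and $v$ yields
\begin{equation*}
\dot B^\varepsilon = \alpha^\varepsilon(p) + \beta^\varepsilon(p)\, B^\varepsilon + \gamma^\varepsilon(p)\,(B^\varepsilon)^2, \qquad \dot A^\varepsilon = \kappa\theta\,\varepsilon\cdot B^\varepsilon,
\end{equation*}
with $A^\varepsilon(0)=B^\varepsilon(0)=0$. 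The coefficients $\alpha^\varepsilon, \beta^\varepsilon, \gamma^\varepsilon$ are explicit rational expressions in $p, \rho, \sigma, \bar h, \varepsilon$.

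The asymptotic step is then to expand each coefficient in powers of $\varepsilon$ using $\bar h = -1/(\varepsilon\theta T)$ and the scalings \eqref{eq:deep_otm_system}, and identify the leading-order balance. The $\bar h^2$ term produces an $O(\varepsilon^{-2})$ contribution that is counterbalanced exactly by the $O(\varepsilon^3)$ suppression from the squared vol-of-vol in the quadratic coefficient $\gamma^\varepsilon$, so that the limiting discriminant equals precisely $\hat{\Delta}^\varepsilon_{II}(p) = \sigma^2(-p^2\bar\rho^2 + 2p/(\theta T))$ and the linear-in-$p$ piece $-p\rho\sigma$ survives unchanged. Solving the resulting quadratic Riccati equation by the standard separation-of-variables method gives the hyperbolic branch ($\tanh$) when $\hat{\Delta}^\varepsilon_{II}(p) > 0$, i.e.\ $p \in (0, 2/(\theta T\bar\rho^2))$, the degenerate linear branch $-v_0 p \rho/\sigma$ when $\hat{\Delta}^\varepsilon_{II}(p) = 0$ at the endpoints, and the trigonometric branch ($\tan$) when $\hat{\Delta}^\varepsilon_{II}(p) < 0$. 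Assembling $v_0 B^\varepsilon(T) + A^\varepsilon(T)/\varepsilon$ and sending $\varepsilon\to 0$ produces \eqref{eq:gamma3_dotm_explicit}. Finally, the effective domain boundaries $(p^\varepsilon_{II,-}, p^\varepsilon_{II,+})$ are determined by the smallest $|p|$ at which the tangent argument $\tfrac{T}{2}\sqrt{-\hat{\Delta}^\varepsilon_{II}(p)}$ first reaches $\pi/2$.

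The main obstacle will be the bookkeeping in the asymptotic expansion: keeping track of how the auxiliary-measure drifts, the $\varepsilon^{-1}$ divergence in $\bar h$, and the $\varepsilon^{3/2}$ suppression in vol-of-vol combine inside $\alpha^\varepsilon, \beta^\varepsilon, \gamma^\varepsilon$. The delicate cancellation between the large quadratic drift generated by $\bar h^2$ and the small diffusion coefficient is what makes the limiting discriminant finite and negative in the oscillatory region; verifying that all lower-order terms vanish (and that no cross-terms produce spurious $\log\varepsilon$ contributions in $A^\varepsilon$) is where the technical heart of the argument lies, and is deferred to Appendix \ref{app:dotm_term_II}.
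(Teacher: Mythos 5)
Your overall strategy---affine Feynman--Kac ansatz, reduction to a Riccati ODE, and a three-way discriminant analysis to extract the $\tanh$/linear/$\tan$ branches---is essentially the same as the paper's Appendix~\ref{app:dotm_term_II}. The paper first performs an explicit measure change to absorb the stochastic integrals into a single effective coefficient $J^\varepsilon_{II} = \varepsilon^{-3}(p^2/2 - p/(\theta T))$ multiplying $\int_0^T V^\varepsilon_s\,ds$, and then sets up a one-dimensional Feynman--Kac PDE in $v$; you set up the full two-dimensional affine PDE and let the $X$-dependence decouple. This is a matter of bookkeeping rather than a genuinely different route.

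However, your asymptotic order-counting contains an error that would derail the execution. You write that ``the $\bar h^2$ term produces an $O(\varepsilon^{-2})$ contribution that is counterbalanced exactly by the $O(\varepsilon^3)$ suppression from the squared vol-of-vol.'' For $\Gamma^\varepsilon_{II}$ this is not the relevant balance: $\bar h$ enters Term~II only \emph{linearly} through the tilted drift of $X^\varepsilon_t$, and it is the product of that drift with the exponent scale $p/\varepsilon^2$---together with the quadratic-variation compensator $p^2/(2\varepsilon^3)$---that produces the $O(\varepsilon^{-3})$ constant coefficient $J^\varepsilon_{II}$; this is what cancels against the $O(\varepsilon^3)$ quadratic coefficient $\tfrac{1}{2}\sigma^2\varepsilon^3$ to give a finite discriminant $\hat\Delta^\varepsilon_{II}$. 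The quantity $\bar h^2/\bar\rho^2 \sim \varepsilon^{-2}$ belongs to $C(\bar h)$ in the Term~I analysis of Appendix~\ref{app:dotm_term_I}, where the extra $1/\varepsilon$ from the integral $\int V^\varepsilon_t\,dt/\varepsilon$ restores the $\varepsilon^{-3}$ scale; you appear to have imported the Term~I order count into Term~II. Relatedly, your final assembly ``$v_0 B^\varepsilon(T) + A^\varepsilon(T)/\varepsilon$'' is dimensionally inconsistent with $\Gamma^\varepsilon_{II}(p) = \lim_{\varepsilon\to 0}\varepsilon^2\log u^\varepsilon(0,x,V^\varepsilon_0)$: with your ansatz $\exp(A^\varepsilon + B^\varepsilon v/\varepsilon)$ evaluated at $V^\varepsilon_0 = \varepsilon v_0$, the exponent is $A^\varepsilon + B^\varepsilon v_0$, so the SCGF is $\lim \varepsilon^2(A^\varepsilon + v_0 B^\varepsilon)$. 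For this to have a nonzero finite limit $B^\varepsilon$ must blow up like $\varepsilon^{-2}$, contradicting your claim that the $v/\varepsilon$ rescaling yields a Riccati system with a non-degenerate limit. The paper handles this via the substitution $\psi^\varepsilon_{II}(t) = -\tfrac{2}{\sigma^2\varepsilon^3}(u^\varepsilon_{II})'/u^\varepsilon_{II}$, which shifts the $\varepsilon$-dependence into a linear second-order ODE whose coefficients converge; you would need the analogous device to make your ``separation of variables'' pass to the limit cleanly.
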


The decay rate for Term II is then:
\begin{equation} \label{eq:dotm_term_II_result}
    \lim_{\varepsilon \to 0} (\text{Term II}) = -\frac{1}{q'} \Lambda^\varepsilon_{II}(1) = -\frac{1}{q'} \sup_{p} \{ p - \Gamma^\varepsilon_{II}(p) \}.
\end{equation}

Combining these results allows us to establish the upper bound for the second moment decay rate.

\subsection{Asymptotic Optimality Result} \label{sec:dotm_optimality}

We are now in a position to prove the main result of this section: the logarithmic efficiency of the proposed importance sampling estimator in the deep out-of-the-money regime.

\begin{theorem}[Deep OTM Asymptotic Optimality]
Let $P_1^\varepsilon(\varepsilon)$ and $P_2^\varepsilon(\varepsilon; \bar{h})$ be the first and second moments of the IS estimator with the drift defined in Equation \eqref{eq:specific_drift} and $\bar{h} = -\frac{1}{\varepsilon \theta T}$. Then:
\begin{equation}
    \lim_{\varepsilon \to 0} \varepsilon^2 \log \left( \frac{P_2^\varepsilon(\varepsilon; \bar{h})}{P_1^\varepsilon(\varepsilon)^2} \right) = 0.
\end{equation}
\end{theorem}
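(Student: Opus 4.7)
The plan is to mirror the two-sided bound structure already used for the short-maturity theorem, with the LDP speed $\varepsilon^{2}$ replacing $T$ and with the deep-OTM rate functions $\Lambda_1^\varepsilon$ and $\Lambda_{II}^\varepsilon$ playing the roles of their short-maturity counterparts. Specifically, I will prove the matching lower bound $\liminf_{\varepsilon\to 0}\varepsilon^{2}\log P_2^\varepsilon \ge -2\Lambda_1^\varepsilon(1)$ by Jensen's inequality, and the matching upper bound $\limsup_{\varepsilon\to 0}\varepsilon^{2}\log P_2^\varepsilon \le -2\Lambda_1^\varepsilon(1)$ by the Hölder decomposition \eqref{eq:dotm_holder}, then subtract $\lim_{\varepsilon\to 0}\varepsilon^{2}\log (P_1^\varepsilon)^{2} = -2\Lambda_1^\varepsilon(1)$ (Proposition \ref{prop:dotm_first_moment_decay}) to conclude.

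The lower bound is essentially free. Since $(S_T-K)^+ Q(\bar{h})^{-1}$ is unbiased under $\bar{\mathbb{P}}^\varepsilon$, Jensen's inequality gives $P_2^\varepsilon(\varepsilon;\bar{h}) \ge (P_1^\varepsilon(\varepsilon))^{2}$, after which taking $\varepsilon^{2}\log$ and invoking Proposition \ref{prop:dotm_first_moment_decay} directly yields the desired inequality. The ingredients for the upper bound are already assembled in the preceding subsection: combining \eqref{eq:dotm_holder} with the propositions that identify Term I and Term II (and noting that $2\varepsilon^{2}\log S_0 \to 0$) gives
\begin{equation*}
    \limsup_{\varepsilon\to 0}\varepsilon^{2}\log P_2^\varepsilon(\varepsilon;\bar{h}) \;\le\; \inf_{q>1} G^\varepsilon(q), \qquad G^\varepsilon(q) \coloneqq \frac{v_0\sqrt{2/q}}{\sigma\theta\bar{\rho}T}\tan\!\left(\frac{\sigma\sqrt{2q}}{2\theta\bar{\rho}}\right) - \Bigl(1-\tfrac{1}{q}\Bigr)\Lambda_{II}^\varepsilon(1).
\end{equation*}

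The hard part, and the only genuine obstacle, is showing that $\inf_{q>1}G^\varepsilon(q) = -2\Lambda_1^\varepsilon(1)$. My approach is the same duality argument invoked in the short-maturity proof: first argue continuity and convexity of $G^\varepsilon$ on $(1,\infty)$, together with $G^\varepsilon(q)\to +\infty$ as $q$ approaches the first pole of the tangent factor, to guarantee a unique interior minimizer $q^\star$; then verify that the drift choice $\bar{h}=-1/(\varepsilon\theta T)$ from \eqref{eq:dotm_h_choice} is precisely the one produced by the Legendre duality between $\Gamma_1^\varepsilon$ in \eqref{eq:gamma_1_dotm_result} and $\Lambda_1^\varepsilon$ in \eqref{eq:Lambda_1_dotm}. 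Concretely, I expect to write the first-order condition $\partial_q G^\varepsilon(q^\star)=0$ as an algebraic relation linking $q^\star$ to the maximizer $p^\star$ of $p - \Gamma_1^\varepsilon(p)$, and then evaluate $G^\varepsilon(q^\star)$ by substituting the Riccati identities from Appendices \ref{app:dotm_term_I} and \ref{app:dotm_term_II}, observing that Term I at $q=q^\star$ reconstructs the Legendre contribution from the $\sqrt{V_t}\,dW^2$ integral while Term II reconstructs the remaining shift of $X^\varepsilon_T$, and the two pieces sum to $-2\Lambda_1^\varepsilon(1)$.

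I anticipate two subtleties that the short-maturity sketch glossed over and that I would pin down explicitly here. First, one must check that $q^\star$ lies in the open interval where the oscillatory Riccati branch is valid, i.e.\ that the argument $\sigma\sqrt{2q^\star}/(2\theta\bar{\rho})$ stays strictly below $\pi/2$; this is a parameter condition analogous to the boundaries $(p_-^\varepsilon,p_+^\varepsilon)$ of Proposition \ref{prop:dotm_scgf}. Second, the Girsanov densities used to pass to $\widetilde{\mathbb{P}}^\varepsilon$ must satisfy Novikov's criterion uniformly in $\varepsilon$, which follows from the Feller condition and the $O(\varepsilon^{1.5})$ volatility-of-volatility scaling but deserves an explicit remark. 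Once these points are secured, combining the matching bounds gives $\lim_{\varepsilon\to 0}\varepsilon^{2}\log P_2^\varepsilon(\varepsilon;\bar{h}) = -2\Lambda_1^\varepsilon(1)$, and subtracting $\lim_{\varepsilon\to 0}\varepsilon^{2}\log (P_1^\varepsilon(\varepsilon))^{2}$ closes the proof.
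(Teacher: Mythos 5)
Your proposal follows the paper's proof step for step: Jensen's inequality for the matching lower bound, the Hölder decomposition \eqref{eq:dotm_holder} plus the Term~I / Term~II propositions to bound $\limsup_{\varepsilon\to 0}\varepsilon^2\log P_2^\varepsilon$ by $\inf_{q>1}G^\varepsilon(q)$, and the same Legendre-duality argument (existence of a unique interior minimizer $q^\star$ via convexity and the tangent singularity, then identification of $G^\varepsilon(q^\star)$ with $-2\Lambda_1^\varepsilon(1)$) to close the upper bound. The two extra checks you flag — that $q^\star$ stays inside the oscillatory Riccati branch, and that Novikov's condition holds uniformly in $\varepsilon$ — are sensible refinements but do not alter the structure, so this is essentially the paper's own argument.
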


\begin{proof}
The proof proceeds by establishing matching lower and upper bounds for the limit of the normalized second moment.

\paragraph{Lower Bound.}
By Jensen's inequality, for any random variable $Z$, $\mathbb{E}[Z^2] \ge (\mathbb{E}[Z])^2$. Applying this to our unbiased estimator:
\begin{equation}
    P_2^\varepsilon(\varepsilon; \bar{h}) \ge (P_1^\varepsilon(\varepsilon))^2.
\end{equation}
Taking logarithms, multiplying by $\varepsilon^2$, and applying the limit from Proposition \ref{prop:dotm_first_moment_decay}:
\begin{equation} \label{eq:dotm_lower_bound}
    \liminf_{\varepsilon \to 0} \varepsilon^2 \log P_2^\varepsilon(\varepsilon; \bar{h}) \ge 2 \lim_{\varepsilon \to 0} \varepsilon^2 \log P_1^\varepsilon(\varepsilon) = -2 \Lambda_1^\varepsilon(1).
\end{equation}

\paragraph{Upper Bound.}
Recall the upper bound derived from the Hölder decomposition in Section \ref{sec:dotm_second_moment}. Define the function $G^\varepsilon(q)$ for $q > 1$ as:
\begin{equation}
    G^\varepsilon(q) \coloneqq \frac{v_0 \sqrt{2/q}}{\sigma \theta \bar{\rho} T} \tan\left( \frac{\sigma \sqrt{2q}}{2 \theta \bar{\rho}} \right) - \left(1 - \frac{1}{q}\right) \Lambda^\varepsilon_{II}(1).
\end{equation}
The inequality derived in Equation \eqref{eq:dotm_holder} states that $\limsup_{\varepsilon \to 0} \varepsilon^2 \log P_2^\varepsilon(\varepsilon; \bar{h}) \le \inf_{q > 1} G^\varepsilon(q)$.

We rely on the analytical properties of $G^\varepsilon(q)$ to characterize this infimum:
\begin{enumerate}
    \item \textbf{Existence:} The function $G^\varepsilon(q)$ is continuous and convex. Furthermore, as $q$ approaches the singularity of the tangent term, $G^\varepsilon(q) \to \infty$. These properties guarantee the existence of a unique minimizer $q^*$.
    \item \textbf{Optimality:} The choice of drift $\bar{h}$ aligns the change of measure with the variational minimizer of the rate function. By the duality between the cumulant generating function and the rate function, this alignment ensures that the minimum value of $G^\varepsilon(q)$ coincides exactly with the optimal decay rate.
\end{enumerate}
Therefore, the critical exponent $q^*$ satisfies:
\begin{equation}
    \inf_{q > 1} G^\varepsilon(q) = G^\varepsilon(q^*) = -2 \Lambda_1^\varepsilon(1).
\end{equation}
Substituting this into the inequality yields the sharp upper bound:
\begin{equation} \label{eq:dotm_upper_bound}
    \limsup_{\varepsilon \to 0} \varepsilon^2 \log P_2^\varepsilon(\varepsilon; \bar{h}) \le -2 \Lambda_1^\varepsilon(1).
\end{equation}

\paragraph{Conclusion.}
Combining \eqref{eq:dotm_lower_bound} and \eqref{eq:dotm_upper_bound}, we obtain:
\begin{equation}
    \lim_{\varepsilon \to 0} \varepsilon^2 \log P_2^\varepsilon(\varepsilon; \bar{h}) = -2 \Lambda_1^\varepsilon(1) = \lim_{\varepsilon \to 0} \varepsilon^2 \log (P_1^\varepsilon(\varepsilon)^2).
\end{equation}
Subtracting the right-hand side from the left-hand side yields the result:
\begin{equation}
    \lim_{\varepsilon \to 0} \varepsilon^2 \log \left( \frac{P_2^\varepsilon(\varepsilon; \bar{h})}{P_1^\varepsilon(\varepsilon)^2} \right) = 0.
\end{equation}
\end{proof}

\section{Numerical Experiments} \label{sec:numerics}

In this section, we assess the practical performance of the proposed state-dependent importance sampling scheme. We compare the standard error and computational efficiency of our IS estimator against the standard "Brute-Force" Monte Carlo (BMC) method.
The efficiency gain is quantified by the Variance Reduction Ratio (VRR), defined as:
\begin{equation}
    \text{VRR} \coloneqq \frac{\text{Var}(\text{BMC Estimator})}{\text{Var}(\text{IS Estimator})} \approx \left( \frac{\text{SE}_{\text{BMC}}}{\text{SE}_{\text{IS}}} \right)^2,
\end{equation}
where variance is estimated using sample variance over $M$ independent paths. All simulations are performed using a high-order discretization scheme (e.g., Milstein scheme for the variance process) to minimize discretization bias.

\subsection{Short-Maturity Regime Performance}

We first investigate the short-maturity limit. The model parameters are chosen to reflect a high-volatility regime often used in short-term asymptotics literature (e.g., \cite{feng2012short}):
\begin{equation*}
    S_0 = 2000, \quad K = 2200, \quad v_0 = \theta = 0.36, \quad \kappa = 60, \quad \sigma = 3, \quad \rho = -0.1, \quad r = 0.
\end{equation*}
We compare two maturities: $T = 1/252$ (1 day) and $T = 21/252$ (1 month). The number of sample paths is fixed at $M = 2^{18}$.

\begin{table}[H]
\centering
\caption{Comparison of BMC and IS estimators in the short-maturity regime ($K/S_0 = 1.1$).}
\label{tab:short_maturity_results}
\renewcommand{\arraystretch}{1.2}
\begin{tabular}{@{}lcccc@{}}
\toprule
\textbf{Maturity} & \textbf{Method} & \textbf{Price} & \textbf{Std. Error (SE)} & \textbf{Rel. Error} \\ \midrule
$T = 1/252$ & BMC & 0.150070 & 0.005682 & 3.79\% \\
(1 Day)     & IS  & 0.147814 & 0.000472 & 0.32\% \\
            &     & \multicolumn{3}{r}{\textbf{VRR $\approx$ 144.9}} \\ \midrule
$T = 21/252$& BMC & 64.825366 & 0.307750 & 0.47\% \\
(1 Month)   & IS  & 64.833236 & 0.172721 & 0.27\% \\
            &     & \multicolumn{3}{r}{\textbf{VRR $\approx$ 3.17}} \\ \bottomrule
\end{tabular}
\end{table}

The results in Table \ref{tab:short_maturity_results} strongly validate the theoretical predictions. For the 1-day maturity, where the option is deep OTM in terms of time-scaled deviations, the IS method achieves a massive variance reduction factor of approximately \textbf{145}. This confirms that the drift adjustment effectively counteracts the rarity of the event.
As maturity increases to 1 month, the event becomes less rare (the option is closer to the money in probability terms), and the VRR decreases to a modest factor of 3.17. This behavior is consistent with the definition of asymptotic optimality: the benefits are most pronounced in the limit $T \to 0$.

\subsection{Deep Out-of-the-Money Regime Performance}

Next, we examine the deep OTM regime. Here, we use a parameter set with slower mean reversion to highlight the impact of the scaling analyzed in Section \ref{sec:deep_otm}:
\begin{equation*}
    S_0 = 2000, \quad v_0 = \theta = 0.5, \quad \kappa = 15, \quad \sigma = 1, \quad \rho = -0.1, \quad r = 0.
\end{equation*}
We vary the strike price $K$ to span a range of moneyness $K/S_0 \in [1.0, 2.0]$ and measure the VRR across three maturities: Short (1 day), Medium (1 month), and Long (1 year).

\begin{figure}[H]
    \centering
    \includegraphics[width=0.8\textwidth]{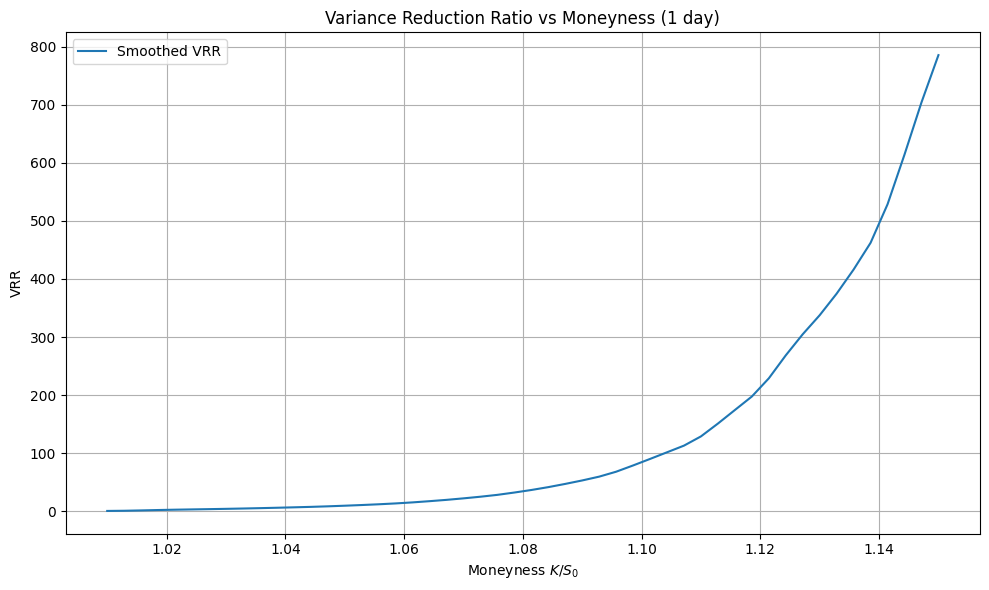} 
    % Placeholder for the user's Figure 1
    \caption{VRR vs. Moneyness for $T=1/252$ (1 Day). The VRR grows exponentially with moneyness, exceeding 2500 for deep OTM strikes.}
    \label{fig:vrr_short}
\end{figure}

\begin{figure}[H]
    \centering
    \includegraphics[width=0.8\textwidth]{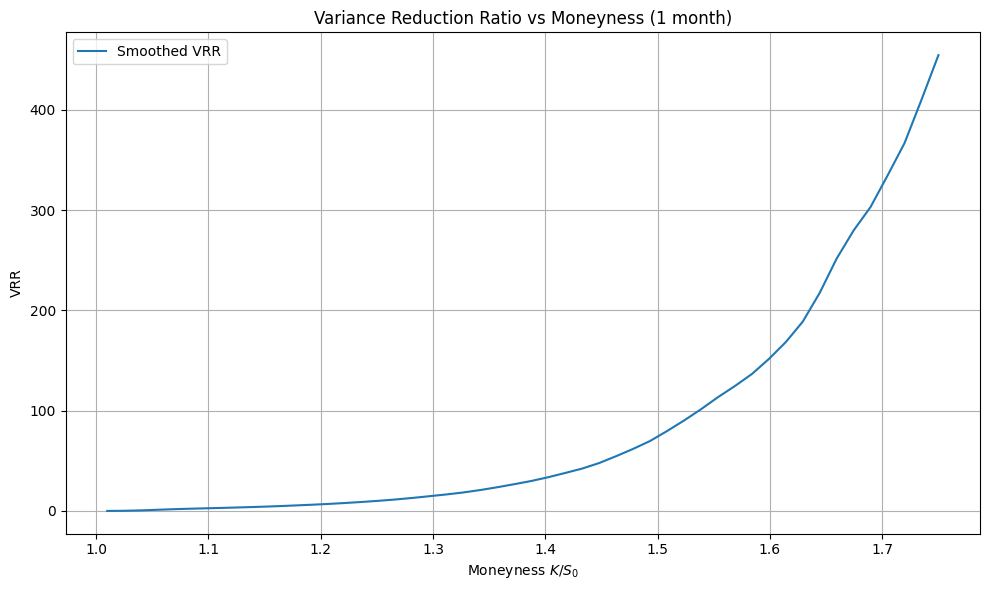}
    % Placeholder for the user's Figure 2
    \caption{VRR vs. Moneyness for $T=21/252$ (1 Month). Significant variance reduction is maintained, peaking around 450.}
    \label{fig:vrr_medium}
\end{figure}

\begin{figure}[H]
    \centering
    \includegraphics[width=0.8\textwidth]{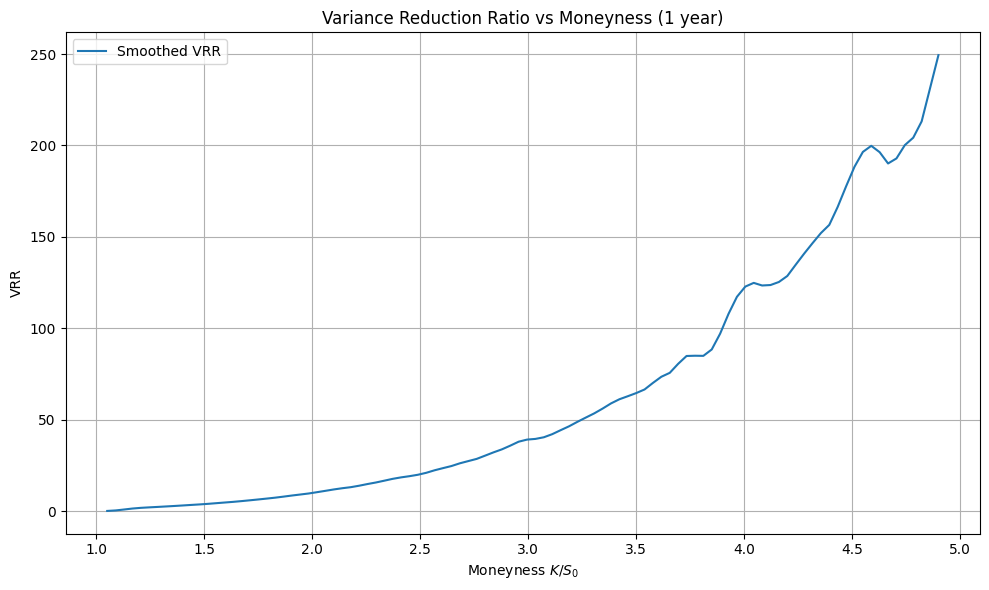}
    % Placeholder for the user's Figure 3
    \caption{VRR vs. Moneyness for $T=1$ (1 Year). The VRR saturates below 250, indicating the diminishing effectiveness of the specific drift as the time horizon allows for more complex path dynamics.}
    \label{fig:vrr_long}
\end{figure}

\paragraph{Discussion.}
Figures \ref{fig:vrr_short} through \ref{fig:vrr_long} illustrate the exponential efficiency of the proposed scheme.
\begin{itemize}
    \item In the \textbf{Short Maturity} case (Fig. \ref{fig:vrr_short}), the VRR explodes as $K$ increases, reaching values over 2500. This confirms that our IS density $Q(\bar{h})$ captures the large deviation optimal path almost perfectly in this regime.
    \item In the \textbf{Medium Maturity} case (Fig. \ref{fig:vrr_medium}), the method remains highly effective (VRR $\sim 450$), demonstrating robustness.
    \item In the \textbf{Long Maturity} case (Fig. \ref{fig:vrr_long}), while still providing useful variance reduction (VRR $\sim 100-250$), the growth rate slows down. This is physically intuitive: over long horizons, the variance process can fluctuate significantly away from its initial value, and a drift proportional to $\sqrt{V_t}$ (based on the slowly-varying assumption) captures the dominant behavior but may miss second-order path fluctuations. Nevertheless, the method remains superior to standard MC.
\end{itemize}

\section{Conclusion} \label{sec:conclusion}

In this paper, we have developed and analyzed an asymptotically optimal importance sampling strategy for pricing European call options under the Heston model. By leveraging the Large Deviation Principle, we constructed a state-dependent change of measure where the drift of the driving Brownian motions is scaled by the instantaneous volatility.

Our theoretical contributions are twofold. First, in the \textbf{short-maturity regime}, we bridged the gap between the known implied volatility asymptotics and Monte Carlo variance reduction, proving via Riccati analysis that our scheme achieves logarithmic efficiency. Second, and more significantly, we introduced a novel \textbf{slow mean-reversion scaling} ($\delta = \varepsilon^{-2}$) for the \textbf{deep OTM regime}. We demonstrated that under this scaling, the stochastic volatility contributes non-trivially to the rate function, and our specific drift design is required to match the asymptotic decay of the second moment.

Numerical experiments confirmed our theoretical findings, showing substantial variance reduction ratios—spanning from two to three orders of magnitude—particularly in the regimes where standard estimators fail most severely. These results highlight the power of combining large deviation theory with singular perturbation techniques to design efficient simulation algorithms for complex path-dependent derivatives. Future work may extend this framework to path-dependent options (e.g., Asian or Barrier options) or Rough Volatility models, where the non-Markovian nature of volatility poses new challenges for importance sampling design.

\bibliographystyle{plainnat} 
\bibliography{references}

\section*{Appendix}
\addcontentsline{toc}{section}{Appendix}
\setcounter{subsection}{0}
\renewcommand{\thesubsection}{\Alph{subsection}}

\subsection{Derivation of the Short-Maturity Term I} \label{app:sm_term_I}

In this appendix, we derive the asymptotic limit of the integrated variance moment presented in Equation \eqref{eq:limit_term_I}. We aim to evaluate the limit of the moment generating function at maturity $T$, defined as:
\
\begin{equation}
    L_I \coloneqq \lim_{T \to 0}  T \log{\mathbb{E}^{\widetilde{\mathbb{P}}} \left[ \exp\left( q C(\bar{h}) \int_0^T V_s ds \right) \mid V_0 = v_0 \right]}
\end{equation}
\paragraph{Feynman-Kac and Riccati Equation}
Let $F_I (t,v)\coloneqq \mathbb{E}^{\widetilde{\mathbb{P}}} \left[ \exp\left( q C(\bar{h}) \int_0^t V_s ds \right) \mid V_0 = v \right]$ denote the expectation term over the horizon $t$. By the Feynman-Kac theorem, $F_I(t, v)$ satisfies the following partial differential equation:
\begin{equation} \label{eq:fk_pde_sm_app}
    \frac{\partial F_I}{\partial t} = \frac{1}{2} \sigma^2 v \frac{\partial^2 F_I}{\partial v^2} + (\kappa\theta - \tilde{\kappa}_I v) \frac{\partial F_I}{\partial v} + q C(\bar{h}) v F_I,
\end{equation}
subject to the initial condition $F_I(0, v) = 1$. The effective mean-reversion is $\tilde{\kappa}_I=\kappa - 2 \rho \sigma$.
Given the affine structure of the Heston model, we adopt the exponential-affine ansatz $F_I(t, v) = \exp(\phi_I(t) + v \psi_I(t))$. Substituting this ansatz into \eqref{eq:fk_pde_sm_app} yields a system of ODEs, where the Riccati equation for $\psi_I(t)$ is decoupled:
\begin{equation} \label{eq:app_riccati}
    \psi_I'(t) = C_0 - C_1 \psi_I(t) + C_2 \psi_I(t)^2, \quad \psi_I(0) = 0,
\end{equation}
Here, the coefficients are defined as:
\begin{equation}
C_{I,0} =  q \left( 1 + 2\bar{h} + \frac{\bar{h}^2}{\bar{\rho}^2} \right), \quad
   C_{I,1}  = \tilde{\kappa}_I, \quad
    C_{I,2} = \frac{\sigma^2}{2}.
\end{equation}
To linearize \eqref{eq:app_riccati}, we employ the standard variable transformation $\psi_I(t) = -\frac{2}{\sigma^2} \frac{u_I'(t)}{u_I(t)}$.
The function $u_I(t)$ then satisfies the second-order linear ODE:
 \begin{equation} \label{eq:app_ode_u}
    u_I''(t) + C_{I,1} u_I'(t) + C_{I,0} C_{I,2} u_I(t) = 0,
\end{equation}
with initial conditions $u_I(0) = 1$ and $u_I'(0) = 0$ (implied by $\psi_I(0)=0$).

\paragraph{Asymptotic Analysis}
We now consider the regime where the maturity $T \to 0$. Recall the importance sampling drift is chosen as $\bar{h} = -\frac{k}{\theta T}$. This implies the coefficient $C_{I,0}$ scales as $O(T^{-2})$:
\begin{equation}
    C_{I,0}  \sim \frac{q k^2}{\bar{\rho}^2 \theta^2 T^2}.
\end{equation}
Consequently, the discriminant of the characteristic equation is dominated by the product term $4 C_{I,0} C_{I,2}$, leading to a large negative value:
\begin{equation} \label{eq:app_discriminant}
    \Delta_I = C_{I,1}^2 - 4 C_{I,0} C_{I,2} \sim - \frac{2 \sigma^2 q k^2}{\bar{\rho}^2 \theta^2 T^2}.
\end{equation}
The negative discriminant indicated that the system operates in a highly oscillatory regime. Let $\omega_I \coloneqq \frac{1}{2}\sqrt{-\Delta_I}$. The asymptotic frequency is given by:
\begin{equation} \label{eq:app_omega}
    \omega_I \sim  \frac{\sigma k \sqrt{2q}}{2 \bar{\rho} \theta T}.
\end{equation}
The general solution for  \eqref{eq:app_ode_u} is thus:
\begin{equation}
    u_I(t) = e^{C_{I,1} t / 2} \left( \cos(\omega_I t) + \frac{C_{I,1}}{2\omega_I} \sin(\omega_I t) \right).
\end{equation}
In the limit $t = T \rightarrow 0$, the damping term $e^{C_{I,1} t / 2} \rightarrow 1$, and the ratio $\frac{C_{I,1}}{2\omega_I} \sin(\omega_I t) \rightarrow 0$. We approximate the solution and its derivative as:
\begin{equation}
    u_I(T) \sim \cos{\omega_I T}, \quad u_I'(T) \sim -\omega_I \sin{\omega_I T}.
\end{equation}
Substituting these back into the transformation for $\psi_I(T)$:
\begin{equation}
    \psi_I(T) = -\frac{2}{\sigma^2} \frac{u_I'(T)}{u_I(T)}\sim \frac{2\omega_I}{\sigma^2} \tan(\omega_I T).
\end{equation}

Finally, we compute the limit of the exponent. Note that the constant term  $\phi_I(T)$ scales as $O(1)$ and vanishes under the $T \log(\cdot)$ scaling. The limit is determined entirely by the $v_0 \psi_I(T)$ term:
\begin{equation}
    L_I = v_0 \lim_{T \to 0} T \psi_I(T) =  v_0 \cdot \frac{k \sqrt{2q}}{\sigma \bar{\rho} \theta} \tan\left( \frac{\sigma k \sqrt{2q}}{2 \bar{\rho} \theta} \right).
\end{equation}
Multiplying  by the factor $1/q$ from the H\"older decomposition, we obtain the final result:
\begin{equation}
\lim_{T \rightarrow 0} (\text{Term I}) = \frac{v_0 k \sqrt{2/q}}{\sigma \theta \bar{\rho}} \tan\left( \frac{\sigma k \sqrt{2q}}{2 \theta \bar{\rho}} \right).
\end{equation}

\subsection{Derivation of the Short-Maturity $\Gamma_{II}(p)$} \label{app:sm_term_II}

In this section, we determine the limiting SCGF for the log-price process under the auxiliary measure $\widetilde{\mathbb{P}}$. Our objective is to evaluate the limit:
\begin{equation}
    \Gamma_{II}(p) \coloneqq \lim_{T \to 0} T \log \mathbb{E}^{\widetilde{\mathbb{P}}} \left[ \exp\left( \frac{p}{T} (X_T - X_0) \right) \right].
\end{equation}

\paragraph{Feynman-Kac and Riccati Equation}
Let $F_{II}(t, v) = \mathbb{E}^{\widetilde{\mathbb{P}}}[\exp(\frac{p}{T}(X_T - X_t)) \mid V_t = v]$ denote the moment generating function. Under $\widetilde{\mathbb{P}}$, the drift of $X_t$ is modified to $(\frac{3}{2} + \bar{h})V_t$. By the Feynman-Kac theorem, $F_{II}$ satisfies the following PDE:
\begin{equation} \label{eq:fk_pde_term2}
    \frac{\partial F_{II}}{\partial t} = \frac{1}{2}\sigma^2 v \frac{\partial^2 F_{II}}{\partial v^2} + \left( \kappa\theta - \tilde{\kappa}_{II} v \right) \frac{\partial F_{II}}{\partial v} +  \left[\left( \frac{3}{2} + \bar{h} \right) \frac{p}{T} + \frac{1}{2} \bar{\rho}^2 \left( \frac{p}{T} \right)^2\right] v  F_{II},
\end{equation}
subject to $F_{II}(0,v)=1$. The effective mean-reversion speed is $\tilde{\kappa}_{II} = \kappa - 2\rho\sigma$.
Applying the affine ansatz $F_{II}(t, v) = \exp(\phi_{II}(t) + v\psi_{II}(t))$ yields the Riccati ODE for $\psi_{II}(t)$:
\begin{equation}
    \psi_{II}'(t) = C_{II,0}(T) - C_{II,1}(T) \psi_{II}(t) + C_{II,2} \psi_{II}(t)^2, \quad \psi_{II}(0) = 0.
\end{equation}
Unlike the variance moment case (Appendix \ref{app:sm_term_I}), the coefficients here depend explicitly on $T$:
\begin{equation}
    C_{II,0}(T) = \frac{p}{T} \left( \frac{3}{2} + \bar{h} \right) + \frac{1}{2} \left(\frac{p}{T}\right)^2 \bar{\rho}^2, \quad
    C_{II,1}(T) = \tilde{\kappa}_{II} - \frac{p}{T} \rho \sigma, \quad
    C_{II,2} = \frac{\sigma^2}{2}.
\end{equation}
Using the transformation $\psi_{II}(t)=-\frac{2}{\sigma^2} \frac{u_{II}'(t)}{u_{II}(t)}$, we obtain the second-order linear ODE:
\begin{equation}
    u_{II}''(t)+C_{II,1}(T)u_{II}'(t)+C_{II,0}(T)C_{II,2}u_{II}(t)=0
\end{equation}
with initial conditions $u_{II}(0)=1$ and $u_{II}'(0)=0$. 

\paragraph{Asymptotic Analysis}
We now analyze the coefficients in the limit $T \rightarrow 0$. Substituting the importance sampling drift $\bar{h}$, we observe the following scaling behaviors:
\begin{equation}
    C_{II,0}(T) \approx \frac{1}{T^2} \left( -\frac{pk}{\theta} + \frac{p^2 \bar{\rho}^2}{2} \right), \quad C_{II,1}(T) \approx -\frac{p \rho \sigma}{T}.
\end{equation}
A distinct feature of this regime is that the linear coefficient  $C_{II,1}(T)$ scales as $O(T^{-1})$. Consequently, its square contributes to the leading order of the discriminant:
\begin{equation}
    \Delta_{II} = C_{II,1}(T)^2 - 4C_{II,0}(T)C_{II,2} \approx \frac{1}{T^2}\left[ \left( -{p\rho\sigma} \right)^2 - 2 \sigma^2 \left( -\frac{pk}{\theta} + \frac{p^2 \bar{\rho}^2}{2} \right) \right].
\end{equation}
We define the scaled discriminant $\hat{\Delta}_{II} \coloneqq T^2\Delta_{II} = \sigma^2 \left( p^2 (2\rho^2 - 1) + \frac{2pk}{\theta} \right)$. The nature of the solution depends on the sign of $\hat{\Delta}_{II}(p)$:
\begin{itemize}
    \item \textbf{Exponential Regime ($\hat{\Delta}_{II} > 0$)}: The characteristic roots are real. The solution involves hyperbolic functions, and the limit yields:
    \begin{equation}
        \Gamma_{II}(p)=\frac{v_0}{\sigma^2}\left(-p\rho\sigma + \sqrt{\hat{\Delta}_{II}}\tanh{\left(\frac{\sqrt{\hat{\Delta}_{II}}}{2} \right)} \right).
    \end{equation}
    \item \textbf{Linear Regime ($\hat{\Delta}_{II} = 0$)}: The ODE degenerates, and the solution $u_{II}(T)$ behaves such that the log-derivative is linear. This occurs at the critical points $p \in \{0,p^*_{II}\}$, yielding:
    \begin{equation}
        \Gamma_{II}(p)=-\frac{v_0p\rho}{\sigma}.
    \end{equation}
    \item \textbf{Oscillatory Regime ($\hat{\Delta}_{II} < 0$)}: The characteristic roots are imaginary. The solution involves trigonometric functions similar to Appendix \ref{app:sm_term_I}. The asymptotic limit is:
\begin{equation}
    \Gamma_{II}(p) = \frac{v_0}{\sigma^2} \left( -p \rho \sigma + \sqrt{-\hat{\Delta}_{II}} \tan\left( \frac{\sqrt{-\hat{\Delta}_{II}}}{2} \right) \right).
\end{equation}
\end{itemize}
These three cases constitute the piecewise definition of the SCGF presented in Equation \ref{eq:gamma3_explicit_unified}

\subsection{Derivation of the Deep OTM SCGF $\Gamma_1^\varepsilon(p)$} \label{app:dotm_riccati}

In this appendix, we rigorously derive the limiting SCGF for the Deep OTM regime, denoted as $\Gamma_1^\varepsilon(p)$, and determine its domain $(p^\varepsilon_-, p^\varepsilon_+)$. Recall the definition:
\begin{equation}
    \Gamma_1^\varepsilon(p) \coloneqq \lim_{\varepsilon \to 0} \varepsilon^2 \log \mathbb{E}^{\mathbb{P}^\varepsilon} \left[ \exp\left( \frac{p}{\varepsilon^2} (X^\varepsilon_T - X^\varepsilon_0) \right) \right].
\end{equation}
The dynamics under the scaling $\delta = \varepsilon^{-2}$ follow System \eqref{eq:deep_otm_system}.

\paragraph{Measure Change}
To facilitate the expectation calculation, we perform a Girsanov transformation to eliminate the stochastic integral with respect to  $W^1$. Let $\widetilde{\mathbb{Q}}$ be the measure defined by the Radon-Nikodym derivative:
\begin{equation}
    \frac{d\widetilde{\mathbb{Q}}}{d\mathbb{P}^\varepsilon} \coloneqq \exp\left( \frac{p\rho}{\varepsilon^{1.5}} \int_0^T \sqrt{V^\varepsilon_t} dW^1_t - \frac{1}{2} \frac{p^2\rho^2}{\varepsilon^3} \int_0^T V^\varepsilon_t dt \right).
\end{equation}
Under $\widetilde{\mathbb{Q}}$, the expectation reduces to a functional of the the integrated variance with an effective coefficient $J^\varepsilon$. This coefficient collects contributions from the original drift, the measure change compensator, and the quadratic variation from $W^2$:
\begin{equation}
    J^\varepsilon = -\frac{p}{2\varepsilon^2} + \frac{1}{2}\left(\frac{p\rho}{\varepsilon^{1.5}}\right)^2 + \frac{1}{2}\left(\frac{p\bar{\rho}}{\varepsilon^{1.5}}\right)^2 = \frac{p^2}{2\varepsilon^3} - \frac{p}{2\varepsilon^2}.
\end{equation}
\paragraph{Feynman-Kac and Riccati Equation}
Let $F^{\varepsilon}(t, v) = \mathbb{E}^{\widetilde{\mathbb{Q}}} [ \exp( \int_0^t J^{\varepsilon} V^\varepsilon_s ds ) \mid V^\varepsilon_0 = v ]$. By the Feynman-Kac theorem, $F^{\varepsilon}(t, v)$ satisfies the PDE:
\begin{equation} \label{eq:dotm_fk_pde}
    \frac{\partial F^{\varepsilon}}{\partial t} = \frac{1}{2} (\sigma \varepsilon^{1.5})^2 v \frac{\partial^2 F^{\varepsilon}}{\partial v^2} + (\kappa \varepsilon^3 \theta - \tilde{\kappa}^\varepsilon v) \frac{\partial F^{\varepsilon}}{\partial v} + J_\varepsilon v F^{\varepsilon},
\end{equation}
subject to $F^{\varepsilon}(0, v) = 1$. Here, $\tilde{\kappa}^\varepsilon = \kappa \varepsilon^2 - p \rho \sigma$ is the effective mean-reversion speed.
Substituting the affine ansatz $F^{\varepsilon}(t, v) = \exp(\phi^\varepsilon(t) + v \psi^\varepsilon(t))$ into \eqref{eq:dotm_fk_pde}, we obtain the Riccati ODE for $\psi^\varepsilon(t)$:
\begin{equation} \label{eq:dotm_riccati_psi}
    (\psi^\varepsilon)'(t) = J^\varepsilon - \tilde{\kappa}^\varepsilon \psi^\varepsilon(t) + \frac{1}{2} (\sigma \varepsilon^{1.5})^2 \psi^\varepsilon(t)^2, \quad \psi^\varepsilon(0)=0.
\end{equation}
To solve this, we use the transformation $\psi^\varepsilon(t) = -\frac{2}{\sigma^2 \varepsilon^3} \frac{(u^\varepsilon)'(t)}{u^\varepsilon(t)}$, we obtain linear ODE:
\begin{equation}
    (u^\varepsilon)''(t) + \tilde{\kappa}^\varepsilon (u^\varepsilon)'(t) + \left( J^\varepsilon \frac{\sigma^2 \varepsilon^3}{2} \right) u^\varepsilon(t) = 0.
\end{equation}
with initial conditions $u^\varepsilon(0)=1, (u^\varepsilon)'(0)=0$.

\paragraph{Asymptotic Analysis}
We now consider the limit $\varepsilon \to 0$. The constant term in the ODE simplifies as the $\varepsilon^3$ factors cancel:
\begin{equation}
    \lim_{\varepsilon \rightarrow 0} \left(J^\varepsilon \frac{\sigma^2 \varepsilon^3}{2}  \right) = \lim_{\varepsilon \rightarrow 0} \left(\frac{p^2}{2\varepsilon^3} - \frac{p}{2\varepsilon^2}\right) \frac{\sigma^2 \varepsilon^3}{2} = \frac{p^2 \sigma^2}{4}.
\end{equation}
Similarly, the damping term $\tilde{\kappa}^\varepsilon \rightarrow -p \rho \sigma$. The limiting ODE becomes:
\begin{equation}
    (u^\varepsilon)''(t) - p \rho \sigma (u^\varepsilon)'(t) + \frac{p^2 \sigma^2}{4} u^\varepsilon(t) = 0.
\end{equation}

The discriminant of the characteristic equation is $\Delta^\varepsilon = (-p\rho\sigma)^2 - p^2 \sigma^2 = -p^2 \sigma^2 \bar{\rho}^2$. Since $\Delta^\varepsilon < 0$ for $p \neq 0$, the solution is strictly oscillatory. Define the frequency $\omega^\varepsilon = \frac{|p| \bar{\rho} \sigma}{2}$. The solution evaluated at maturity $T$ is:
\begin{equation}
    u^\varepsilon(T) \approx e^{\frac{p \rho \sigma T}{2}} \cos(\omega^\varepsilon T),
\end{equation}
Substituting this back into expression for $\psi^\varepsilon(T)$. Recall that the initial variance scales as $V^\varepsilon_0 = \varepsilon v_0$. Therefore, the limit of the SCGF is:
 \begin{equation}
     \Gamma_1^\varepsilon(p) = \lim_{\varepsilon \rightarrow 0} \varepsilon^2 (\varepsilon v_0) \psi^\varepsilon(T)=\frac{v_0 p}{ -\rho \sigma + \bar{\rho} \sigma \cot\left( \frac{p \bar{\rho} \sigma T}{2} \right) }.
 \end{equation}
 
\paragraph{Effective Domain}
The function $\Gamma_1^\varepsilon(p)$ is well-defined in the interval containing zero where the denominator does not vanish. The boundaries $(p^\varepsilon_-, p^\varepsilon_+)$ are determined by the first singularities $(\xi^\varepsilon_-, \xi^\varepsilon_+)$:
\begin{itemize}
    \item \textbf{Case $\rho > 0$:} $\xi^\varepsilon_+ = \arctan(\bar{\rho}/\rho), \space\xi^\varepsilon_- = \arctan(\bar{\rho}/\rho) - \pi$. Thus,
    \begin{equation}
            p^\varepsilon_+ = \frac{2}{\sigma \bar{\rho} T} \arctan\left(\frac{\bar{\rho}}{\rho}\right), \quad p^\varepsilon_- = \frac{2}{\sigma \bar{\rho} T} \left( \arctan\left(\frac{\bar{\rho}}{\rho}\right) - \pi \right).
        \end{equation}
    \item \textbf{Case $\rho < 0$:} $\xi^\varepsilon_+ = \arctan(\bar{\rho}/\rho), \space\xi^\varepsilon_- = \arctan(\bar{\rho}/\rho) - \pi$. Thus, 
    \begin{equation}
            p^\varepsilon_+ = \frac{2}{\sigma \bar{\rho} T} \arctan\left(\frac{\bar{\rho}}{\rho}\right), \quad p^\varepsilon_- = \frac{2}{\sigma \bar{\rho} T} \left( \arctan\left(\frac{\bar{\rho}}{\rho}\right) - \pi \right).
        \end{equation}
    \item \textbf{Case $\rho = 0$:} The condition becomes $\cot(\xi^\varepsilon_{\pm}) = 0$, yielding $p^\varepsilon_\pm = \pm \frac{\pi}{\sigma T}$.
\end{itemize}
Within $(p^\varepsilon_-, p^\varepsilon_+)$, $\Gamma_1^\varepsilon(p)$ is essentially smooth, satisfying the requirements of the Gärtner-Ellis theorem.

\subsection{Derivation of the Deep OTM Term I} \label{app:dotm_term_I}

In this appendix, we derive the asymptotic limit of the Deep OTM Term I. We aim to compute the limit:
\begin{equation}
    \lim_{\varepsilon \to 0} \frac{\varepsilon^2}{q} \log \mathbb{E}^{\widetilde{\mathbb{P}}^\varepsilon} \left[ \exp\left( q C(\bar{h}) \int_0^T V^\varepsilon_t \frac{dt}{\varepsilon} \right) \right].
\end{equation}

\paragraph{Feynman-Kac and Riccati Equation}
Let $F^\varepsilon_I(t, v) = \mathbb{E}^{\widetilde{\mathbb{P}}^\varepsilon} [ \exp( q C(\bar{h}) \varepsilon^{-1} \int_0^t V^\varepsilon_s ds ) \mid V^\varepsilon_0 = v ]$. By the Feynman-Kac theorem, $F^\varepsilon_I(t, v)$ satisfies the following partial differential equation (PDE):
\begin{equation} \label{eq:dotm_fk_pde_I}
    \frac{\partial F^\varepsilon_I}{\partial t} = \frac{1}{2} (\sigma \varepsilon^{1.5})^2 v \frac{\partial^2 F^\varepsilon_I}{\partial v^2} + (\kappa \varepsilon^3 \theta - \tilde{\kappa}^\varepsilon_{I} v) \frac{\partial F^\varepsilon_I}{\partial v} + \frac{q C(\bar{h})}{\varepsilon} v F^\varepsilon_I,
\end{equation}
subject to the initial condition $F^\varepsilon_I(0, v) = 1$. Here, the effective mean-reversion speed under $\widetilde{\mathbb{P}}^\varepsilon$ is $\tilde{\kappa}^\varepsilon_{I} = \kappa \varepsilon^2 - 2\rho \sigma \varepsilon$.

By the affine structure, the solution takes the form $F^\varepsilon_I (t, v) = \exp(\phi^\varepsilon_I (t) + v \psi^\varepsilon_I (t))$. Substituting this ansatz into \eqref{eq:dotm_fk_pde_I} leads to the Riccati ODE for $\psi^\varepsilon_I(t)$:
\begin{equation} \label{eq:dotm_riccati_ode}
    \left({\psi^\varepsilon_I}\right)'(t) = C_{I,0}^\varepsilon - C_{I,1}^\varepsilon \psi^\varepsilon_I (t) + C_{I,2}^\varepsilon \psi^\varepsilon_I(t)^2, \quad \psi^\varepsilon_I(0) = 0.
\end{equation}
The coefficients correspond to the terms in the PDE:
\begin{equation}
    C_{I,0}^\varepsilon = \frac{q C(\bar{h})}{\varepsilon}, \quad
    C_{I,1}^\varepsilon = \tilde{\kappa}^\varepsilon_{I}, \quad
   C_{I,2}^\varepsilon = \frac{1}{2} (\sigma \varepsilon^{1.5})^2 .
\end{equation}

We linearize the equation using the transformation $\psi^\varepsilon_I(t) = -\frac{1}{C_{I,2}^\varepsilon} \frac{(u^\varepsilon_I)'(t)}{u^\varepsilon_I(t)}$. Substituting this into \eqref{eq:dotm_riccati_ode} yields the second-order linear ODE:
\begin{equation}
    (u^\varepsilon_I )''(t) + C_{I,1}^\varepsilon (u^\varepsilon_I )'(t) + C_{I,0}^\varepsilon C_{I,2}^\varepsilon u^\varepsilon_I (t) = 0,
\end{equation}
with initial conditions $u^\varepsilon_I (0)=1, (u^\varepsilon_I )'(0)=0$.

\paragraph{Asymptotic Analysis}
We now analyze the coefficients in the small-noise limit. The drift parameter is $\bar h = -1/(\varepsilon \theta T)$. Substituting this into the constant term, we find:
\begin{equation}
    C_{I,0}^\varepsilon C_{I,2}^\varepsilon \approx \left( \frac{q}{\varepsilon^3 \theta^2 T^2 \bar{\rho}^2} \right) \left( \frac{\sigma^2 \varepsilon^3}{2} \right) = \frac{q \sigma^2}{2 \theta^2 T^2 \bar{\rho}^2}.
\end{equation}
Crucially, the $\varepsilon^3$ scaling factors cancel exactly, leaving and $O(1)$ constant term. The discriminant of the characteristic equation becomes:
\begin{equation}
    \Delta^\varepsilon_I = (C_{I,1}^\varepsilon)^2 - 4 C_{I,0}^\varepsilon C_{I,2}^\varepsilon \approx (\kappa \varepsilon^2 - 2\rho \sigma \varepsilon)^2 -  \frac{2 q \sigma^2}{\theta^2 T^2 \bar{\rho}^2}.
\end{equation}

As $\varepsilon \to 0$, the first term $(C_{I,1}^\varepsilon)^2 \sim O(\varepsilon^2)$ vanishes, and the discriminant converges to a negative constant:
\begin{equation}
    \lim_{\varepsilon \rightarrow 0}\Delta^\varepsilon_I = \hat{\Delta}_I^\varepsilon  \coloneqq -\frac{2 q \sigma^2}{\theta^2 T^2 \bar{\rho}^2}.
\end{equation}

This negative discriminant implies an oscillatory solution regime. Let $\omega_I^\varepsilon \coloneqq \frac{1}{2}\sqrt{-\hat{\Delta}_I^\varepsilon}$. In the limit, the damping term vanishes ($C_{I,1}^\varepsilon \to 0$), and the solution behaves as:
\begin{equation}
   u^\varepsilon_I (T) \sim \cos(\omega_I^\varepsilon T)
\end{equation}
The logarithm derivative is thus:
\begin{equation}
    \psi^\varepsilon_I (T) = -\frac{2}{\sigma^2 \varepsilon^3} \frac{(u^\varepsilon_I )'(T)}{u^\varepsilon_I (T)} \sim \frac{2\omega_I^\varepsilon}{\sigma^2 \varepsilon^3} \tan(\omega_I^\varepsilon T).
\end{equation}
Finally, we compute the limit of the log-expectation $L^\varepsilon_I$:
\begin{equation}
    \lim_{\varepsilon \to 0} \frac{\varepsilon^2 (\varepsilon v_0)}{q}\psi^\varepsilon_I(T) =
    \frac{1}{q} \cdot \frac{v_0 \sqrt{2q}}{\sigma \theta \bar{\rho} T} \tan\left( \frac{\sigma \sqrt{2q}}{2 \theta \bar{\rho}} \right).
\end{equation}
\subsection{Derivation of the Deep OTM Auxiliary SCGF $\Gamma_3^\varepsilon(p)$} \label{app:dotm_term_II}

In this appendix, we derive the limiting SCGF under the auxiliary measure $\widetilde{\mathbb{P}}^\varepsilon$, denoted as $\Gamma^\varepsilon_{II} (p)$. This function is required to evaluate Term II in Section \ref{sec:dotm_second_moment}. We define:
\begin{equation}
    \Gamma^\varepsilon_{II}(p) \coloneqq \lim_{\varepsilon \to 0} \varepsilon^2 \log \mathbb{E}^{\widetilde{\mathbb{P}}^\varepsilon} \left[ \exp\left( \frac{p}{\varepsilon^2} (X^\varepsilon_T - X^\varepsilon_0) \right) \right].
\end{equation}

\paragraph{Feynman-Kac and Riccati Equation}
The expectation can be reduced to a functional of the integrated variance. We define the effective coefficient $J^\varepsilon_{II}$ which collects the contributions from the drift of the scaled log-price and the quadratic variation compensator.
Under $\widetilde{\mathbb{P}}^\varepsilon$, the drift of $X^\varepsilon_t$ is dominated by $\varepsilon \bar{h} V_t \approx -\frac{1}{\varepsilon \theta T} V^\varepsilon_t$. Combined with the exponent $p/\varepsilon^2$ and the standard quadratic variation $\frac{p^2}{2\varepsilon^3}$, we define:
\begin{equation}
    J^\varepsilon_{II} \coloneqq \frac{p^2}{2\varepsilon^3} - \frac{p}{\varepsilon^3 \theta T} = \frac{1}{\varepsilon^3} \left( \frac{p^2}{2} - \frac{p}{\theta T} \right).
\end{equation}
Both terms are of order $O(\varepsilon^{-3})$, which balances the diffusion coefficient in the limit.

Let $F^\varepsilon_{II}(t, v) = \mathbb{E}^{\widetilde{\mathbb{P}}^\varepsilon} [ \exp( \int_0^t J^\varepsilon_{II} V^\varepsilon_s ds ) \mid V^\varepsilon_0 = v ]$. By the Feynman-Kac theorem, $F^\varepsilon_{II}$ satisfies the following PDE:
\begin{equation} \label{eq:dotm_fk_pde_II}
    \frac{\partial F^\varepsilon_{II}}{\partial t} = \frac{1}{2} ( \sigma \varepsilon^{1.5})^2 v \frac{\partial^2 F^\varepsilon_{II}}{\partial v^2} + (\kappa \varepsilon^3 \theta - \tilde{\kappa}^\varepsilon_{II} v) \frac{\partial F^\varepsilon_{II}}{\partial v} + J^\varepsilon_{II} v F^\varepsilon_{II},
\end{equation}
subject to $F^\varepsilon_{II}(0, v) = 1$. Here,  $\tilde{\kappa}^\varepsilon_{II} = \kappa \varepsilon^2 - p \rho \sigma$ represents the effective mean-reversion speed under the measure $\widetilde{\mathbb{P}}^\varepsilon$.

Substituting the affine ansatz $F^\varepsilon_{II}(t, v) = \exp(\phi^\varepsilon_{II}(t) + v \psi^\varepsilon_{II}(t))$ into the PDE yields the Riccati ODE for $\psi^\varepsilon_{II}(t)$:
\begin{equation} \label{eq:dotm_riccati_ode_3}
    \left({\psi^\varepsilon_{II}}\right)'(t) = J^\varepsilon_{II} - \tilde{\kappa}^\varepsilon_{II} \psi^\varepsilon_{II}(t) + \frac{1}{2} ( \sigma \varepsilon^{1.5})^2 \psi^\varepsilon_{II}(t)^2, \quad \psi^\varepsilon_{II}(0) = 0.
\end{equation}

To solve this, we employ the transformation $\psi^\varepsilon_{II}(t) = -\frac{2}{\sigma^2 \varepsilon^3} \frac{(u^\varepsilon_{II})'(t)}{u^\varepsilon_{II}(t)}$. Substituting this into \eqref{eq:dotm_riccati_ode_3} yields the second-order linear ODE for $u_{II}(t)$:
\begin{equation}
    {(u^\varepsilon_{II})}''(t) + \tilde{\kappa}^\varepsilon_{II} {(u^\varepsilon_{II})}'(t) + \left( J^\varepsilon_{II} \frac{\sigma^2 \varepsilon^3}{2} \right) u^\varepsilon_{II}(t) = 0,
\end{equation}
with initial conditions $u^\varepsilon_{II}(0)=1, (u^\varepsilon_{II})'(0)=0$.
\paragraph{Asymptotic Analysis}
We evaluate the coefficients in the limit. The $\varepsilon^3$ terms in the constant coefficient cancel out:
\begin{equation}
\lim_{\varepsilon \to 0}  \left( J^\varepsilon_{II} \frac{\sigma^2 \varepsilon^3}{2} \right) = \frac{\sigma^2}{4}\left(p^2 - \frac{2p}{\theta T}\right).
\end{equation}
The damping term converges to a constant drift Since $\tilde{\kappa}^\varepsilon_{II} \to -p\rho\sigma$. The limiting characteristic equation has the discriminant:
\begin{equation}
    \hat{\Delta}^\varepsilon_{II} \coloneqq (-p\rho\sigma)^2 - 4 \left[ \frac{\sigma^2}{4} \left( p^2 - \frac{2p}{\theta T} \right) \right] = \sigma^2 \left( -p^2 \bar{\rho}^2 + \frac{2p}{\theta T} \right).
\end{equation}
The roots of $\hat{\Delta}^\varepsilon_{II} = 0$ are $p = 0$ and $p^{\varepsilon,*}_{II}=\frac{2}{\theta T \bar{\rho}^2}$. This quadratic structure leads to three distinct regimes for the SCGF $\Gamma^\varepsilon_{II}(p)$:
\begin{itemize}
    \item \textbf{Exponential Regime ($\hat{\Delta}^\varepsilon_{II} > 0$):} For $p \in (0, p^{\varepsilon,*}_{II})$, the roots are real. The solution involves hyperbolic functions, yielding
\begin{equation}
        \Gamma^\varepsilon_{II}(p) = \frac{v_0}{\sigma^2} \left( -p \rho \sigma + \sqrt{\hat{\Delta}^\varepsilon_{II}} \tanh\left( \frac{\sqrt{\hat{\Delta}^\varepsilon_{II}}}{2} T \right) \right).
    \end{equation}
    \item \textbf{Linear Regime ($\hat{\Delta}^\varepsilon_{II} = 0$):} For $p \in \{0, p^{\varepsilon,*}_{II}\}$,the discriminant vanishes. The solution is linear in time, and the SCGF simplifies to: 
\begin{equation}
    \Gamma^\varepsilon_{II}(p)=-\frac{v_0 p \rho}{\sigma}.
\end{equation}
\item \textbf{Oscillatory Regime ($\hat{\Delta}^\varepsilon_{II} < 0$):} Outside the interval $[0,p^{\varepsilon,*}_{II}]$, the roots are imaginary. Define the frequency $\omega^\varepsilon_{II} = \frac{1}{2}\sqrt{-\hat{\Delta}^\varepsilon_{II}}$. The solution involves trigonometric functions, and the limit is
\begin{equation}
        \Gamma^\varepsilon_{II}(p) = \frac{v_0}{\sigma^2} \left( -p \rho \sigma + \sqrt{-\hat{\Delta}^\varepsilon_{II}} \tan\left( \frac{\sqrt{-\hat{\Delta}^\varepsilon_{II}}}{2} T \right) \right).
    \end{equation}
\end{itemize}

This explicitly characterizes $\Gamma^\varepsilon_{II}(p)$ within its effective domain $(p^\varepsilon_{II,-}, p^\varepsilon_{II,+})$.
\end{document}